\theoremstyle{plain}
\newtheorem*{theorem*}{Theorem}
\newtheorem{remark}{Remark}
\newtheorem{theorem}{Theorem}[section]
\newtheorem*{claim*}{Claim}
\newtheorem{claim}[theorem]{Claim}
\newtheorem*{lemma*}{Lemma}
\newtheorem{lemma}[theorem]{Lemma}
\newtheorem{definition}{Definition}[section]
\newtheorem{problem}[theorem]{Problem}
\newtheorem{hypothesis}[theorem]{Hypothesis}
\newtheorem*{proof*}{proof}
\newcommand\addtag{\refstepcounter{equation}\tag{\theequation}}
\newcommand{\SSE}{{\sc Small-Set Expansion}}
\newcommand{\MLA}{{\sc Minimum Linear Arrangement}}
\newcommand{\kBP}{{\sc $k$-Balanced Partitioning}}
\newcommand{\SC}{{\sc Sparsest Cut}}
\definecolor{background-color}{gray}{0.98}
\begin{document}

\title{Approximate Hierarchical Clustering via \\ Sparsest Cut and Spreading Metrics}
\author{Moses Charikar\thanks{Computer Science Department, Stanford University, moses@cs.stanford.edu, vaggos@stanford.edu} \ \ \ \ \ \ \ \ \ \ \ \ \ \ \ \ \ \ Vaggos Chatziafratis$^*$}

\maketitle

\begin{abstract}

Dasgupta recently introduced a cost function for the hierarchical clustering
of a set of points given pairwise similarities between them.
He showed that this function is $NP$-hard to optimize, but a top-down recursive partitioning heuristic based on an 
$\alpha_n$-approximation algorithm for uniform sparsest cut gives an approximation of 
$O(\alpha_n\log n)$ (the current best algorithm has $\alpha_n=O(\sqrt{\log n})$). We show that the aforementioned sparsest cut heuristic in fact obtains an $O(\alpha_n)$-approximation.
The algorithm also applies to a generalized cost function studied by Dasgupta.
Moreover, we obtain a strong inapproximability result, showing that the Hierarchical Clustering objective is hard to approximate to 
within any constant factor assuming the \textit{Small-Set Expansion (SSE) Hypothesis}.
Finally, we discuss approximation algorithms based on convex relaxations.
We present a spreading metric SDP relaxation for the problem and show that it has integrality gap at most $O(\sqrt{\log n})$. 
The advantage of the SDP relative to the sparsest cut heuristic is that it provides an explicit lower bound on the optimal solution and could potentially yield an even better approximation for hierarchical clustering.
In fact our analysis of this SDP served as the inspiration for our improved analysis of the sparsest cut heuristic.
We also show that a spreading metric LP relaxation gives an $O(\log n)$-approximation.

\end{abstract}

\clearpage
\section{Introduction}\label{sec:introduction}

Hierarchical Clustering (HC) of a data set is a recursive partitioning of the data into clusters.
Such methods are widely used in data analysis.
To be more formal, in the Hierarchical Clustering (HC) problem, the input is a weighted undirected graph $G = (V, E, w)$. Each data point corresponds to a node in the graph and edges connect similar points. The heavier the edge weight the stronger the similarity between the data points. 
The goal is to produce a partitioning of the data into successively smaller clusters, starting from the original graph $G$ as the initial cluster and ending with $n$ singleton clusters. The HC is represented as a tree with leaves corresponding to data points and internal nodes corresponding to clusters in the hierarchy. 

Such a hierarchical decomposition of data has several advantages over \textit{flat clustering} ($k$-means, $k$-center etc): firstly, there is no need to fix the number $k$ of clusters we want to create; secondly, large datasets are understood simultaneously at many levels of granularity and thirdly, many greedy heuristics with provable approximation guarantees (\cite{dasguptaSTOC16}) can be used to construct it. 

Despite its important applications for many scientific areas such as biology (e.g. gene expression), data analysis, phylogenetics, social sciences and statistics, HC and the algorithms we use to solve it in practice are not yet well-understood. Many heuristics have been proposed, some of which are based on a natural ``bottom-up'' approach by recursively merging data that are similar: at the beginning each data point is a separate cluster and we start merging them based on their similarity as we go up the hierarchy. These are the so-called agglomerative methods that are provided by standard Data Analysis packages and include for example single-linkage, average linkage etc. (\cite{jain1988algorithms,jain1999data,balcan2014robust,hastie2009}). These methods are specified procedurally rather than in terms of an objective function for HC; this lack of objective functions for the problem of HC was addressed by the recent work of Dasgupta (\cite{dasguptaSTOC16}). He introduced a simple cost function that, given pairwise
similarities between data points, assigns a score to any possible tree on those points. The tree corresponds to the hierarchical decomposition of the data and its score reflects the quality of the solution.

Let $T$ be any rooted (not necessarily binary) tree that has a leaf for each point in our dataset.
For a node $u$ in $T$, we denote with $T[u]$ the subtree rooted at $u$, and with $\bold{leaves}(T[u]) \subseteq V$ we denote the leaves of this
subtree. For leaves $i, j \in V$ , the expression $i \lor j$ denotes their lowest common ancestor in $T$, i.e. $T[i \lor j]$ is the smallest subtree whose leaves include both $i$ and $j$. 
The following cost function is the HC cost function:
\begin{align}
cost_G(T)=\sum_{ij \in E} w_{ij} |\bold{leaves}(T[i\lor j])|.
\label{obj:clustering}
\end{align}

We observe that a heavy edge should not be cut at the top of the tree because it would cause a high cost due to the term $|\bold{leaves}(T[i\lor j])|$ that would be large.
For example, if an edge $\{i, j\}$ of unit weight is cut at the first split of the data, then we pay $n$. If it is cut further down, in a subtree that contains a $\delta$ fraction of the data, then
we pay $\delta n$. We would like to find a tree $T^*$ that minimizes the above cost. It is not difficult to see that there must always exist an optimal tree that is binary, since by converting any split that creates more than two subtrees to a sequence of binary splits, we can never increase the cost. A generalized version for the cost function is also considered in \cite{dasguptaSTOC16}:
\begin{align}
cost_G(T)=\sum_{ij \in E} w_{ij} f(|\bold{leaves}(T[i\lor j])|).
\label{obj:gen_clustering}
\end{align}

\subsection{Related Work}
Dasgupta introduced the cost function (\ref*{obj:clustering}) and explained why it is a good objective function for hierarchical clustering. He presented some interesting special cases (e.g. planted partitions) for which optimizing (\ref*{obj:clustering}) actually finds the correct underlying HC. 
He showed that optimizing it is an NP-hard problem and showed that a simple heuristic based on an $\alpha_n$ approximation for Sparsest Cut will achieve a factor $O(\alpha_n \cdot \log n)$ approximation. The current best $\alpha_n$ ratio for Sparsest Cut is $O(\sqrt{\log n})$ from a breakthrough result of \cite{arora2009expander}. The heuristic starts by taking Sparsest Cut for the input graph $G$, splitting it into $(G_1,G_2)$ and then applying Sparsest Cut recursively to the pieces $G_1,G_2$. Dasgupta also proved that a slightly modified heuristic yields basically the same approximation guarantee for optimizing (\ref*{obj:gen_clustering}).

Another natural approach for dealing with HC is to try to optimize standard popular cost functions for \textit{flat clustering}, such as the $k$-means, $k$-median or $k$-center(\cite{dasgupta2005performance,plaxton2003approximation,lin2010general}). However, with this approach, it is necessary to cut at some level the hierarchy so that we get $k$ clusters at the end or use many different values of $k$ to achieve a satisfying depth of decomposition. 

People have also studied methods of HC in terms of statistical consistency (\cite{Hartigan1985,chaudhuri2014consistent, eldridge2015beyond}), where data points are sampled from a fixed underlying distribution and we are interested in the convergence of the tree structure obtained from the data as the sample size goes to infinity. Only a few methods are known to be consistent (\cite{chaudhuri2014consistent, eldridge2015beyond}). Furthermore, the authors of \cite{balcan2014robust} study the performance of agglomerative clustering techniques in the
presence of noise and they propose a new algorithm that is more robust and performs better in cases with noisy data where traditional agglomerative algorithms fail.

Recently we were informed that independently of our work, Roy and Pokutta~\cite{pokuttaNIPS16} got a similar result for Hierarchical Clustering via spreading metrics. In particular, they used an LP relaxation based on ultrametrics to prove an $O(\log n)$ approximation. The LP relaxation they formulated was similar to ours but we viewed it as a vector programming relaxation and we managed to obtain an $O(\sqrt{\log n})$ approximation.  As far as their analysis is concerned they used the extensively studied (in the context of graph partitioning) idea of sphere growing (see~\cite{leighton1988approximate, garg1993approximate,even1999fast, charikar2003clustering, even2000divide}). On the other hand, we got our initial $O(\log n)$ approximation by proving that the hierarchical clustering objective function falls into the \textit{divide and conquer approximation algorithms via spreading metrics} paradigm of~\cite{even2000divide} and combining it with a result of Bartal~(\cite{bartal2004graph}). Finally, they also gave the same constant factor inapproximability result as we did, based on the small set expansion hypothesis.


\subsection{Our results and structure of the paper}
We 
show (\hyperref[sec:better-SpCut]{Section~\ref*{sec:better-SpCut}}) that the recursive sparsest cut (RSC) algorithm
that uses any $\alpha_n$-approximation algorithm for uniform sparsest cut achieves an $O(\alpha_n)$ approximation for hierarchical clustering, shaving a $\log n$ factor from Dasgupta's analysis. The analysis can be modified to prove that the same guarantee holds even for the generalized cost function (\ref*{obj:gen_clustering}). We also present (\hyperref[sec:hardness]{Section~\ref*{sec:hardness}}) a strong inapproximability result for HC, in particular, that it is hard to approximate HC to within any constant factor assuming the Small Set Expansion (SSE) Hypothesis. In~\hyperref[sec:sdp-approx]{Section~\ref*{sec:sdp-approx}}, we present an SDP relaxation based on spreading metrics with integrality gap at most $O(\sqrt{\log n})$ for HC. The advantage of the SDP relative to the sparsest cut heuristic is that it provides an explicit lower bound on the optimal solution and could potentially yield an even better approximation for hierarchical clustering. In fact, we first developed a rounding algorithm for this SDP and our analysis later served as the inspiration for our improved analysis of the sparsest cut heuristic for both cost functions (\ref*{obj:clustering}) and (\ref*{obj:gen_clustering}). Finally, we show how the spreading metrics paradigm of~\cite{even2000divide} in combination with a result of Bartal~\cite{bartal2004graph} (\hyperref[sec:spreading]{Appendix~\ref*{sec:spreading}}) can be exploited in order to get an $O(\log n)$ approximation for hierarchical clustering via a linear program (\hyperref[sec:lp-approx]{Appendix~\ref*{sec:lp-approx}}). We conclude in \hyperref[sec:conclusion]{Section~\ref*{sec:conclusion}} with questions for further research. Some preliminaries are deferred to the \hyperref[sec:prelim]{Appendix \ref{sec:prelim}} and omitted proofs are given in \hyperref[sec:omitted proofs]{Appendix \ref{sec:omitted proofs}}.

A key idea behind our analysis of the recursive sparsest cut algorithm as well as the formulation of the SDP relaxation is to view a hierarchical clustering of $n$ data points as
a collection of partitions of the data, one for each level $t = n-1,\ldots,1$.
Here the partition for a particular level $t$ consists of maximal clusters in the hierarchical clustering of size at most $t$.
When we partition a cluster of size $r$, we charge this to levels $t \in [r/4,r/2]$ of this collection of partitions.
This is crucial for eliminating the $\log n$ term in the approximation guarantee.




\section{Better Analysis for Recursive Sparsest Cut (RSC)}\label{sec:better-SpCut}

As discussed previously, 
Dasgupta~\cite{dasguptaSTOC16} showed that
a simple top-down Recursive Sparsest Cuts (RSC) heuristic 
that uses an $\alpha_n$-approximation algorithm for uniform sparsest cut gives an approximation of $O(\alpha_n \log n)$ for hierarchical clustering. 
More precisely, the RSC heuristic starts from the given graph $G=(V,E)$, uses any $\alpha_n$-approximation algorithm for sparsest cut, thus splitting $G$ into $(G_1,G_2)$ and then recurses on $G_1$ and $G_2$. The output is a binary tree of the sequence of cuts performed by the algorithm.

In this section, by drawing inspiration from our SDP construction and analysis presented later in \hyperref[sec:sdp-approx]{Section~\ref*{sec:sdp-approx}}, we present an improved analysis for this simple heuristic, dropping the $\log n$ factor and showing that it actually yields an $O(\alpha_n)$ approximation. This is satisfying since any improvement for Sparsest Cut would immediately yield a better approximation result for hierarchical clustering. Additionally, fast algorithms (i.e. nearly linear time algorithms) for Sparsest Cut (\cite{sherman2009breaking}) render the heuristic useful in practice. 
%

\subsection{Analysis of RSC heuristic}
Let the given graph be $G=(V,E)$. We suppose for clarity of presentation that it is unweighted; the analysis applies directly to weighted graphs and later, we see how to generalize it for more general cost functions. Let OPT be the optimal solution for hierarchical clustering (we abuse notation slightly by using OPT to denote both the solution as well as its objective function value).
Let $OPT(t)$ be the maximal clusters in OPT of size at most $t$. Note that $OPT(t)$ is a partition of $V$.
We denote $E_{OPT}(t)$ the edges that are cut in $OPT(t)$, i.e. edges with end points in different clusters in $OPT(t)$. For convenience, we also define $E_{OPT}(0) \triangleq E$.
%
%
\begin{claim}
$OPT=\sum_{t=0}^{n-1} |E_{OPT}(t)|$.
\end{claim}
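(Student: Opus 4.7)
The plan is to prove the identity by double counting: fix the optimal tree $T^*$, exchange the order of summation between edges and levels, and show that each edge $ij \in E$ contributes exactly $|\mathbf{leaves}(T^*[i \lor j])|$ to the right-hand side, which is precisely its contribution to $OPT$ under the cost function~(\ref{obj:clustering}) in the unweighted setting.

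First, I would characterize membership in $E_{OPT}(t)$ structurally. The partition $OPT(t)$ is obtained by walking down $T^*$ from the root and cutting off a subtree as a cluster as soon as its leaf-set has size at most $t$. Hence two leaves $i,j$ lie in the same cluster of $OPT(t)$ iff some ancestor $u$ of both satisfies $|\mathbf{leaves}(T^*[u])| \leq t$; since $i \lor j$ is the lowest such common ancestor, this holds iff $|\mathbf{leaves}(T^*[i \lor j])| \leq t$. Equivalently,
\begin{equation*}
\{i,j\} \in E_{OPT}(t) \ \iff\  |\mathbf{leaves}(T^*[i \lor j])| \geq t+1.
\end{equation*}
For $t = 0$ this condition is vacuously satisfied by every edge, which matches the convention $E_{OPT}(0) \triangleq E$.

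Next, I would swap the summations:
\begin{equation*}
\sum_{t=0}^{n-1} |E_{OPT}(t)| \ =\ \sum_{t=0}^{n-1} \sum_{ij \in E} \mathbf{1}\bigl[|\mathbf{leaves}(T^*[i \lor j])| \geq t+1\bigr] \ =\ \sum_{ij \in E} \bigl|\{t \in \{0,\ldots,n-1\} : t < |\mathbf{leaves}(T^*[i \lor j])|\}\bigr|.
\end{equation*}
Since $1 \leq |\mathbf{leaves}(T^*[i \lor j])| \leq n$, the inner cardinality is exactly $|\mathbf{leaves}(T^*[i \lor j])|$. Therefore the total equals $\sum_{ij \in E} |\mathbf{leaves}(T^*[i \lor j])|$, which is $cost_G(T^*) = OPT$ by the unit weights assumption.

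There is essentially no hard step here; the only thing that needs a bit of care is the boundary, namely confirming that the $t=0$ convention is consistent with the level-indicator characterization (it is, since every edge has an LCA with at least $2 \geq 1$ leaves), and that $t = n-1$ correctly picks up only edges whose LCA is the root. Once these endpoint checks are done, the claim follows from a clean Fubini-style exchange of sums.
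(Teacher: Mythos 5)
Your proof is correct and follows essentially the same route as the paper: a per-edge double count, noting that an edge whose minimal enclosing cluster has size $r$ lies in $E_{OPT}(t)$ exactly for $t \in \{0,\ldots,r-1\}$ and hence contributes $r$ to both sides. Your version just makes the structural characterization of $E_{OPT}(t)$ and the Fubini-style exchange explicit; no substantive difference.
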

\begin{proof}
%
%
Consider any edge $(u,v) \in E$.
Suppose that the size of the minimal cluster in OPT that contains both $u$ and $v$ is $r$. Then the contribution of $(u,v)$ to the LHS is $r$. On the other hand, $(u,v) \in E_{OPT}(t)$ for all $t \in \{0,\ldots,r-1\}$. Hence the contribution to the RHS is also $r$.
\end{proof}
It will be convenient to use the following bound that is directly implied by the above claim:
\[2OPT = 2\cdot \sum_{t=0}^{n-1} |E_{OPT}(t)| \ge \sum_{t=0}^{n} E_{OPT}( \lfloor t/2\rfloor) \addtag \label{eq:one} \]

Let's look at a cluster $A$ with size $|A|=r$ in the solution produced by RSC. Using a sparsest cut approximation algorithm, we create two clusters $B_1,B_2$ with sizes $s, (r-s)$ respectively, with $B_1$ being the smaller, i.e. $s\le \lfloor r/2 \rfloor$. The contribution of this cut to the hierarchical clustering objective function is: $ |E(B_1,B_2)| \cdot r$.
We basically want to charge this cost to $OPT(\lfloor r/2 \rfloor)$ and for that we first observe that the edges cut in $OPT(\lfloor r/2 \rfloor)$, when restricted to the cluster $A$ (i.e. having both endpoints in $A$), satisfy the following:
\[ s \cdot |E_{OPT}(\lfloor r/2 \rfloor) \cap A| \le \sum_{t=r-s+1}^r |E_{OPT}(\lfloor t/2 \rfloor) \cap A|. \addtag \label{eq:two}\] 	

This follows easily from the fact that 
$|E_{OPT}(t) \cap A| \leq  |E_{OPT}(t-1) \cap A|$.
Now in order to explain our charging scheme, let's look at the partition $A_1,..., A_k$ induced inside the cluster $A$ by $OPT(\lfloor r/2 \rfloor) \cap A$, where by design the size of each $|A_i| = \gamma_i |A|$, $\gamma_i \le 1/2$. We have:
\begin{align*}
\dfrac{|E(A_i, A\setminus A_i)|}{|A_i| |A\setminus A_i|} = \dfrac{|E(A_i, A\setminus A_i)|}{\gamma_i (1-\gamma_i)r^2}, \forall i \in {1,..., k} 
\end{align*}     

We take the minimum over all $i$ (an upper bound on the sparsest cut in $A$) and we have:
\begin{align*}
\min_i \dfrac{|E(A_i, A\setminus A_i)|}{\gamma_i (1-\gamma_i)r^2} \le \dfrac{\sum_i |E(A_i, A\setminus A_i)|}{\sum_i \gamma_i (1-\gamma_i)r^2} \le 2\cdot \dfrac{|E_{OPT}(\lfloor r/2 \rfloor)\cap A|}{r^2/2} = 4 \cdot \dfrac{|E_{OPT}(\lfloor r/2 \rfloor)\cap A|}{r^2}
\end{align*}     

The first inequality above, trivially follows by definition for the minimum and the second inequality holds because $\sum_{i=1}^k \gamma_i = 1, \sum_{i=1}^k \gamma_i^2 \le 1/2$ and the factor of 2 is introduced since we double counted every edge. We partition $A$ using an $\alpha_n$-approximation for sparsest cut and so:
\begin{align*}
\dfrac{|E(B_1,B_2)|}{s(r-s)} \le \alpha_n \cdot \dfrac{4}{r^2} \cdot |E_{OPT}(\lfloor r/2 \rfloor) \cap A|
\end{align*}     
since the RHS (without the $\alpha_n$ factor) is an upper bound of the optimal sparsest cut value.
The contribution of this step to the hierarchical clustering objective function is:
\begin{align} \addtag \label{eq:three}
r |E(B_1, B_2)| \le \dfrac{4\alpha_n s (r-s)}{r} \cdot |E_{OPT}(\lfloor r/2 \rfloor) \cap A| \le 4\alpha_n s \cdot |E_{OPT}(\lfloor r/2 \rfloor) \cap A|.
\end{align}

We claim the following:

\begin{claim}
Let $A$ be a cluster of size $r_A$ in our hierarchical clustering solution, that we split into 2 pieces $(B_1, B_2)$ of size $s_A, r_A -s_A$ respectively with $|B_1| \le |B_2|$. Then, summing over all clusters $A$: \\
\[\sum_{A} \ \sum_{t=r_A -s_A +1}^{r_A} |E_{OPT}(\lfloor t/2 \rfloor) \cap A| \le \sum_{t=1}^n |E_{OPT}(\lfloor t/2 \rfloor)| \addtag \label{eq:four} \]
\end{claim}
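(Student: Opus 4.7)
The plan is a simple exchange-of-summation argument. I would first swap the two sums on the LHS so that, for each fixed $t \in \{1,\ldots,n\}$, I need to control $\sum_{A : r_A - s_A + 1 \le t \le r_A} |E_{OPT}(\lfloor t/2 \rfloor) \cap A|$. The condition $r_A - s_A + 1 \le t \le r_A$ translates to: $A$ is an internal node of the RSC tree whose size is at least $t$, and both of its children $B_1, B_2$ have size strictly less than $t$ (since the larger child has size $r_A - s_A \le t - 1$, and the smaller is even smaller).

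The heart of the argument is the observation that, for each fixed $t$, the clusters $A$ satisfying this condition are pairwise disjoint subsets of $V$. To see this, take any $v \in V$ and consider the unique root-to-leaf path through $v$ in the RSC tree: cluster sizes along this path strictly decrease from $n$ at the root down to $1$ at the leaf, so there is exactly one cluster on this path whose size is $\ge t$ but whose children both drop below $t$. Hence $v$ lies in at most one of the contributing clusters; in fact, for $t \ge 2$ they even form a partition of $V$.

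Once disjointness is in hand, the sets $E_{OPT}(\lfloor t/2 \rfloor) \cap A$ are disjoint subsets of $E_{OPT}(\lfloor t/2 \rfloor)$, so their sizes sum to at most $|E_{OPT}(\lfloor t/2 \rfloor)|$. Summing this over $t = 1, \ldots, n$ and undoing the swap of sums yields exactly (\ref{eq:four}). I do not anticipate any real obstacle here; the things to be careful about are restricting the outer sum to internal nodes of the RSC tree (leaves are never split and contribute nothing), and noting that the strict decrease of cluster sizes along any root-to-leaf path is what makes the ``last ancestor of size $\ge t$'' well-defined and unique, which is what drives the disjointness.
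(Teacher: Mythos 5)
Your proposal is correct and follows essentially the same route as the paper: for each fixed $t$, the contributing clusters $A$ are exactly the minimal clusters of size $\ge t$ (both children of size $< t$), and these are pairwise disjoint, so the sets $E_{OPT}(\lfloor t/2 \rfloor)\cap A$ partition a subset of $E_{OPT}(\lfloor t/2 \rfloor)$; summing over $t$ gives the bound. Your root-to-leaf path argument for disjointness is just a slightly more explicit phrasing of the paper's minimality argument.
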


\begin{proof}
For a fixed value of $t$ and $A$, the LHS is: $|E_{OPT}(\lfloor t/2 \rfloor)\cap A|$. Consider which clusters $A$ contribute such a term to the LHS. From the fact that $r_A-s_A+1\le t\le r_A$, we need to have that $|B_2| < t$ and since $B_2$ is the larger piece that was created when $A$ was split, we deduce that $A$ is a \textbf{minimal} cluster of size $|A| \ge t > |B_2|\ge |B_1|$, i.e. if both $A$'s children are of size less than 
%
%
$t$,
then this cluster $A$ contributes such a term. The set of all such $A$ form a disjoint partition of  
%
$V$
because of the definition for minimality (in order for them to overlap in the hierarchical clustering, one of them needs to be ancestor of the other and this cannot happen because of minimality). Since $E_{OPT}(\lfloor t/2 \rfloor)\cap A$ for all such $A$ forms a disjoint partition of $E_{OPT}(\lfloor t/2 \rfloor)$,  the claim follows by summing up over all $t$.
\end{proof}

\begin{theorem}
Given an unweighted graph G, the Recursive Sparsest Cut algorithm achieves an $O(\alpha_n)$ approximation for the hierarchical clustering problem.
\end{theorem}
\begin{proof}
The proof follows easily by combining (\ref*{eq:one}), (\ref*{eq:two}), (\ref*{eq:three}), (\ref*{eq:four}) and summing over all clusters $A$ created by RSC.
See \hyperref[sec:omitted proofs]{Appendix~\ref*{sec:omitted proofs}}.
\end{proof}

\subsection{Generalized Cost Function and RSC}
In the original~\cite{dasguptaSTOC16} paper introducing the objective function of hierarchical clustering, Dasgupta also considered the more general cost function:
$cost_G(T)=\sum_{ij \in E} w_{ij} f(|\bold{leaves}(T[i\lor j])|)$, where $f$ is defined on the non-negative reals, is strictly increasing, and has $f(0) = 0$ (e.g. $f(x) = \ln(1 + x)$ or $f(x) = x^2$). For this more general cost function, he proved that a slightly modified greedy top-down heuristic (using $\tiny \dfrac{w(S,V\setminus S)}{\min((f|S|),f(|V\setminus S|))}$ with $\tiny \dfrac{1}{3}|V| \le |S|\le \dfrac{2}{3}|V|$, instead of Sparsest Cuts) continues to yield an
$O(\alpha_n\cdot \log n \cdot c_f)$ approximation\footnote{There isn't a direct polynomial time implementation of this heuristic for arbitrary functions $f$ to the best of our knowledge; however, a heuristic based on balanced cut will achieve similar guarantees.}, where $c_f\triangleq \max_{1\le n'\le n}\tiny \dfrac{f(n')}{f(n'/3)}$). Now, we analyze the previous RSC algorithm (with no modifications), but in the case of a weighted graph $G$ and when we are trying to optimize the generalized cost function. 

We again make the natural assumptions that the function $f$ acting on the number of leaves in subtrees, is defined on the nonnegative reals, is strictly increasing and $f(0) = 0$ (also see \hyperref[remark:cost-function]{Remark~\ref*{remark:cost-function}}). We also define: $c_f \triangleq \max_{1\le n'\le n} \dfrac{f(n')}{f(\lfloor n'/2 \rfloor)-f(\lfloor n'/4 \rfloor)}$. For what follows, we abuse notation slightly for ease of presentation and write $r/2, r/4$ etc. instead of $\lfloor r/2 \rfloor, \lfloor r/4\rfloor$ etc. As in the simple unweighted case, we use here the same definitions for $OPT$ and $E_{OPT}(t)$. Let $w(E_{OPT}(t))$ denote the total weight of the edges $E_{OPT}(t)$, i.e. the edges cut by $OPT$ at level $t$, where we define $w(\emptyset)= 0$ and we also define $g(t) \triangleq f(t+1) - f(t)$. We note that $\sum_{t=0}^{r-1} g(t) = f(r)-f(0)=f(r)$.

\begin{claim}
$\sum_{t=0}^{n-1} w(E_{OPT}(t))\cdot g(t) = OPT$ \label{claim:OPT}
\end{claim}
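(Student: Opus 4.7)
The plan is to adapt the double-counting argument from the unweighted claim $OPT = \sum_{t=0}^{n-1} |E_{OPT}(t)|$ to this weighted generalization, exploiting the telescoping structure of $g(t) = f(t+1) - f(t)$. The main observation is that although the factor per edge in the objective is now $w_{ij} f(r)$ (where $r$ is the size of the minimal $OPT$-cluster containing both endpoints), the identity $\sum_{t=0}^{r-1} g(t) = f(r) - f(0) = f(r)$ lets us spread this cost across levels $t = 0, 1, \ldots, r-1$ in exactly the same way as before.

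Concretely, I would swap the order of summation on the left-hand side:
\begin{align*}
\sum_{t=0}^{n-1} w(E_{OPT}(t)) \cdot g(t) = \sum_{ij \in E} w_{ij} \sum_{\substack{0 \le t \le n-1 \\ ij \in E_{OPT}(t)}} g(t).
\end{align*}
Then for each edge $(i,j)$ with minimal common $OPT$-cluster of size $r$, I would argue that $(i,j) \in E_{OPT}(t)$ precisely for $t \in \{0, 1, \ldots, r-1\}$: for $t \ge 1$, the endpoints lie in the same maximal cluster of size at most $t$ iff their minimal common cluster already has size $\le t$, i.e.\ iff $t \ge r$; and for $t=0$ we are using the convention $E_{OPT}(0) \triangleq E$. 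Hence the inner sum becomes $\sum_{t=0}^{r-1} g(t)$, which telescopes to $f(r) - f(0) = f(r)$.

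Summing over all edges then gives $\sum_{ij \in E} w_{ij} f(r_{ij}) = \sum_{ij \in E} w_{ij} f(|\mathbf{leaves}(T^*[i \lor j])|) = OPT$, where $T^*$ is the optimal tree. No serious obstacle is expected; the only things to be careful about are the boundary convention $E_{OPT}(0) = E$ (which is exactly what makes the sum start at $t=0$ so the telescoping covers the full $f(r)$), and verifying the characterization $(i,j) \in E_{OPT}(t) \iff t < r$, which is immediate from the definition of maximal clusters of size at most $t$ and the fact that the sequence of partitions $OPT(t)$ refines as $t$ decreases.
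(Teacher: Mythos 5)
Your proposal is correct and follows essentially the same argument as the paper: an edge-by-edge double count in which each edge with minimal $OPT$-cluster of size $r$ contributes $w_e\sum_{t=0}^{r-1}g(t)=w_e f(r)$ to the left-hand side, matching its contribution $w_e f(r)$ to $OPT$. Your explicit interchange of summation and the careful verification that $e\in E_{OPT}(t)$ exactly when $t<r$ (including the $t=0$ convention) are just slightly more detailed renderings of the paper's own reasoning.
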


\begin{proof}
We will prove that the contributions of an edge $e=(u,v)$ to the LHS and RHS are equal.   
Let $A$ ($|A|=r_e$) be the minimal cluster in the optimal solution that contains both $u,v$. 
The contribution of $e$ to the RHS is: $w_e\cdot f(r_e)$. As for the contribution to the LHS, since $A$ is minimal and $|A|=r_e$, we deduce that $e\in OPT(t), \forall t<r_e$.
Also for levels $t \geq r_e$ we have $e\in A$ or some superset of $A$ and thus $e \not \in OPT(t)$
Hence the contribution to the LHS is: $w_e\cdot \sum_{t=0}^{r-1}g(t)=w_e\cdot f(r).$
\end{proof}

Focus on a cluster $A$ ($|A|=r$) in the solution produced by the algorithm.
Let $cut(A)$ denote the edges in $A$ cut by partitioning $A$.
This contributes
$w(cut(A))\cdot f(r)$ to the objective. 
We will charge our cost using the following quantity related to the optimum solution: 
$\sum_{t=r/4}^ {r/2-1}  w(E_{OPT}(t)\cap A) \cdot g(t)$. 

For that, we look at $OPT(r/2)\cap A$ and let's say that clusters $A_1, A_2,... ,A_k$ are induced by this partition, each being of size $|A_i|= \gamma_i |A| \le |A|/2=r/2$ ($\gamma_i \le 1/2$). 
Then,
\[ SC(A) \le \dfrac{\sum_i w(A_i,A\setminus A_i)}{r^2 \sum_i \gamma_i (1-\gamma_i)} \le \dfrac{2\cdot w(E_{OPT}(r/2)\cap A)}{r^2 \cdot 1/2}\]
where $SC(A)$ is the optimum sparsest cut (value) for $A$.
Since we used an $\alpha_n$-approximation, 
\begin{align*} 
\dfrac{w(cut(A))}{s(r-s)} \le \alpha_n\cdot SC(A) \le \dfrac{4\alpha_n \cdot (w(E_{OPT}(r/2)\cap A)}{r^2} \implies \\
w(cut(A)) \cdot f(r) \le 4\alpha_n \cdot \dfrac{s}{r} \cdot w(E_{OPT}(r/2)\cap A) \cdot f(r) \addtag \label{eq:gen-two}
\end{align*}
Since $w(E_{OPT}(t)\cap A) \geq w(E_{OPT}(t+1)\cap A)$, we have:
\begin{align*}
\sum_{t=r/4}^ {r/2-1}  w(E_{OPT}(t)\cap A) \cdot g(t) \ge w(E_{OPT}(r/2)\cap A) \cdot \sum_{t=r/4}^ {r/2-1}g(t) =\\
= (f(r/2)-f(r/4)) \cdot w(E_{OPT}(r/2)\cap A) \addtag \label{eq:gen-three}
\end{align*}
Using equations (\ref*{eq:gen-two}), (\ref*{eq:gen-three}), we get that:
\[ 
w(cut(A)) \cdot f(r) \le 4\alpha_n \cdot \dfrac{s}{r} \cdot \dfrac{f(r)}{f(r/2)-f(r/4)} \cdot \sum_{t=r/4}^ {r/2-1}  w(E_{OPT}(t)\cap A) \cdot g(t) \addtag \label{eq:gen-four}
\]
We now sum up the cost contributions of all clusters created in our hierarchical clustering solution. Let $s(A)$ be the size of the smaller piece produced in partitioning $A$.
\[ 
cost_{RSC}=\sum _A w(cut(A)) \cdot f(|A|) \leq 4\alpha_n \cdot c_f \sum_A \dfrac{s(A)}{|A|}\sum_{t=|A|/4}^ {|A|/2-1}  w(E_{OPT}(t)\cap A) \cdot g(t)
\addtag \label{eq:costRSC}
\] 
To complete our argument we need to make the comparison between OPT which is: $\sum_{t=0}^{n-1} w(E_{OPT}(t))\cdot g(t)$ and the sum
\[
\sum_A \dfrac{s(A)}{|A|}\sum_{t=|A|/4}^ {|A|/2-1}  w(E_{OPT}(t)\cap A) \cdot g(t) \addtag \label{eq:gen-five},
\]
where the first summation goes over all clusters $A$ in the solution we produce.
\begin{claim}
$\sum_A \dfrac{s(A)}{|A|}\sum_{t=|A|/4}^ {|A|/2-1}  w(E_{OPT}(t)\cap A) \cdot g(t) \le 2\cdot \sum_{t=0}^{n-1} w(E_{OPT}(t))\cdot g(t)$ \label{claim:better-SpCut}
\end{claim}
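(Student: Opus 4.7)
The plan is to swap the order of summation so that the LHS becomes a sum over pairs $(e,t)$ with $e\in E_{OPT}(t)$, and then to show that the total weight $s(A)/|A|$ accumulated by each such pair across all eligible clusters $A$ is at most $2$. Concretely, since $w(E_{OPT}(t)\cap A)=\sum_{e\in E_{OPT}(t),\,e\subseteq A}w_e$ and the range $|A|/4\le t\le|A|/2-1$ is equivalent to $2(t+1)\le|A|\le 4t$, the LHS can be rewritten as
$$\sum_{t=0}^{n-1}g(t)\sum_{e\in E_{OPT}(t)}w_e\!\!\sum_{\substack{A\ni e\\ 2(t+1)\le|A|\le 4t}}\!\!\frac{s(A)}{|A|}.$$
It therefore suffices to prove that the innermost sum is at most $2$, uniformly in $(e,t)$.

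To that end, fix an edge $e=(u,v)$ and observe that the clusters $A$ produced by RSC that contain both endpoints of $e$ form a chain, nested by containment, with strictly decreasing sizes $r_0>r_1>\cdots>r_k$. If the cluster $A_i$ of size $r_i$ is split into a smaller piece of size $s_i:=s(A_i)$ and a larger piece of size $r_i-s_i$, then $A_{i+1}$ (when it exists) is whichever of these two pieces contains $e$, so $r_{i+1}\in\{s_i,r_i-s_i\}$. Because $s_i\le r_i/2\le r_i-s_i$, a short case check gives the key telescoping inequality
$$s_i\le r_i-r_{i+1}\qquad(i=0,1,\ldots,k),$$
where we set $r_{k+1}:=0$ to cover the last split (at which $e$ is cut).

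Now let $I_t:=\{i:2(t+1)\le r_i\le 4t\}$, which is a contiguous index block since the $r_i$'s are monotonically decreasing. Telescoping along $I_t$ gives $\sum_{i\in I_t}s_i\le r_{\min I_t}\le 4t$, and combined with $r_i>2t$ throughout $I_t$ we obtain
$$\sum_{i\in I_t}\frac{s_i}{r_i}\;\le\;\frac{1}{2t}\sum_{i\in I_t}s_i\;\le\;2.$$
Substituting back into the rewritten LHS and summing over edges and levels yields exactly $2\sum_{t}w(E_{OPT}(t))g(t)$, as required.

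I expect the main obstacle to be spotting the telescoping inequality $s_i\le r_i-r_{i+1}$, which crucially uses that $s(A)$ is the \emph{smaller} of the two pieces; once this is in hand, the combinatorial picture (nested chain of clusters, contiguous range $I_t$, and the factor-$2$ slack coming from $r_i\ge 2(t+1)>2t$) fits together cleanly and mirrors the charging argument already used in Claim 2, but now spread across the whole interval $[|A|/4,|A|/2]$ rather than concentrated at $|A|/2$.
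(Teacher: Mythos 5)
Your proof is correct and follows essentially the same charging argument as the paper: fix an edge $e$ and level $t$, note that the eligible clusters form a nested chain with sizes in $(2t,4t]$, and telescope the smaller-piece sizes against consecutive size differences to bound the accumulated weight by $4t/2t=2$. Your explicit inequality $s_i\le r_i-r_{i+1}$ (covering the case where $e$ falls into the smaller piece) is a slightly more careful rendering of a step the paper treats tersely, but the underlying argument is identical.
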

\begin{proof}
Consider some edge $e=(u,v) \in E_{OPT}(t)$. 
Focus on sets $A$ in the solution produced
such that $e \in E_{OPT}(t)\cap A$ so that $e$ contributes to the term $\sum_{t=|A|/4}^ {|A|/2-1}  w(E_{OPT}(t)\cap A) \cdot g(t)$ in the LHS. For all such clusters $A$, we need to have: $|A|/4 \le t < |A|/2 \implies 2t < |A| \le 4t$. 

Let $A_1,A_2,...,A_{k-1}$ be the sets for which the term $w(E_{OPT}(t)\cap A)$ appears: $A_1$ is the largest cluster (satisfying $2t < |A_1| \le 4t$) that contains the edge $e=(u,v)$ and when split we call its larger piece $A_2$ (again this set contains $e$) etc.,  $A_{k-1}$ is the last set for which the term appears and $(u,v)$ does not appear in $A_k$ ($A_k$ is the larger piece of the two that we got when we partitioned $A_{k-1}$). We have:
\begin{align*}
\sum_{i=1}^{k-1} \dfrac{s(A_i)}{|A_i|} = \dfrac{|A_1|-|A_2|}{|A_1|} + \dfrac{|A_2|-|A_3|}{|A_2|} + ... + \dfrac{|A_{k-1}|-|A_k|}{|A_{k-1}|}
\le \dfrac{\sum_{i=1}^{k-1} |A_i| - |A_{i+1}|}{\min_i |A_i|} \le \dfrac{|A_1|}{2t} \le 2. \addtag \label{eq:gen-six}
\end{align*}
(the constant can be optimized, but it does not change the asymptotic bound).

Thus the contribution of every edge $e \in E_{OPT}(t)$ to the LHS is at most $2 w_e g(t)$.
Note that this is exactly the contribution to the RHS.
This establishes the claim.

\end{proof}
\begin{theorem}
RSC achieves an $O(c_f \cdot \alpha_n)$ approximation of the generalized objective function for Hierarchical Clustering.
\end{theorem}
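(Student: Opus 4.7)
The plan is to assemble the three ingredients that have already been established in the section. Inequality~(\ref{eq:costRSC}) bounds the RSC cost by
\[
cost_{RSC} \le 4\alpha_n \cdot c_f \sum_{A} \frac{s(A)}{|A|} \sum_{t=|A|/4}^{|A|/2-1} w(E_{OPT}(t)\cap A)\cdot g(t),
\]
where the outer sum ranges over all clusters $A$ created by the recursive sparsest cut algorithm. This inequality already encapsulates the per-split analysis: within a single cluster $A$, one bounds the sparsest cut value using the partition induced by $OPT(|A|/2)\cap A$, loses an $\alpha_n$ factor from using an approximate sparsest cut, and absorbs the ``$f(|A|)$'' weighting via a factor of $c_f = \max_{n'} f(n')/(f(n'/2)-f(n'/4))$.

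Next I would apply Claim~\ref{claim:better-SpCut}, which is the key charging step: it shows that the double sum appearing above is at most $2\sum_{t=0}^{n-1} w(E_{OPT}(t))\cdot g(t)$. This is precisely where the $\log n$ factor disappears relative to Dasgupta's original analysis, because each edge $e \in E_{OPT}(t)$ can be charged only to ancestor clusters with sizes in the window $(2t,4t]$, and the resulting sum $\sum_i s(A_i)/|A_i|$ telescopes into a geometric quantity bounded by a constant. Finally, I would invoke Claim~\ref{claim:OPT}, which identifies $\sum_{t=0}^{n-1} w(E_{OPT}(t))\cdot g(t)$ exactly with the value $OPT$ of the optimal generalized-cost hierarchical clustering.

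Chaining these three facts yields
\[
cost_{RSC} \le 4\alpha_n \cdot c_f \cdot 2 \cdot OPT = 8\,\alpha_n\, c_f \cdot OPT,
\]
which is the claimed $O(c_f \cdot \alpha_n)$ approximation. The hard part of the theorem is not this final assembly but rather the earlier Claim~\ref{claim:better-SpCut}, where one has to verify that a fixed edge $e$ is charged only to clusters $A$ whose sizes lie in the narrow interval $(2t,4t]$ and exploit the nested chain structure among such clusters to bound $\sum_i s(A_i)/|A_i|$ by a constant. Once that charging is in place, the present theorem is an immediate corollary, and in particular the $c_f$ factor enters exactly through the per-cluster inequality~(\ref{eq:gen-four}) that upgrades the unweighted analysis to the generalized cost function.
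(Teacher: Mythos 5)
Your proposal is correct and follows essentially the same route as the paper's own proof: it chains inequality~(\ref{eq:costRSC}) with the charging bound of Claim~\ref{claim:better-SpCut} and the identity of Claim~\ref{claim:OPT} to obtain $cost_{RSC} \le 8\,c_f\,\alpha_n\cdot OPT$. No gaps.
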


\begin{proof}
The proof follows from (\ref*{eq:costRSC}), \hyperref[claim:OPT]{Claim \ref*{claim:OPT}}, (\ref*{eq:gen-two}), (\ref*{eq:gen-three}), (\ref*{eq:gen-four}), (\ref*{eq:gen-five}), and (\ref*{eq:gen-six}).
See \hyperref[sec:omitted proofs]{Appendix~\ref*{sec:omitted proofs}}.
\end{proof}
 
\begin{remark} \label{remark:cost-function}
In order for our guarantee to be useful, we need $c_f$ to be a constant (or a slowly growing quantity). This would mean that $f$ is polynomially growing. We observe that in the case where the function $f$ is exponentially growing then our guarantee is not interesting (and in fact we may need to use a different strategy than RSC) and in the case $f$ is logarithmic, then we would get a factor $ \approx  O(\alpha_n \log n )$ approximation, which is the same guarantee as~\cite{dasguptaSTOC16}.
\end{remark}

\section{Hierarchical Clustering Hardness and the Small Set Expansion Hypothesis}\label{sec:hardness}

In this section, we prove a strong
inapproximability result, showing that, even in unweighted graphs (i.e. unit cost edges), the Hierarchical Clustering objective is hard to
approximate to within any constant factor, assuming the \textit{Small Set Expansion} hypothesis.

\subsection{$SSE$ and hardness amplification}

Given a graph $G(V,E)$, define the following quantities for non-empty subsets $S \subset V$:
normalized set size $\mu(S) \triangleq |S|/|V|$, and
edge expansion $\Phi_G (S) \triangleq ${\small $\dfrac{|E(S,V\setminus S)|}{\sum_{i \in S}d_i}$} (here $d_i$ is the degree of $i$).
The {\em Small Set Expansion} hypothesis was introduced by Raghavendra and Steurer~\cite{raghavendra2010graph}.

\begin{problem}[\SSE $(\eta,\delta)$]
Given a regular graph $G(V,E)$, distinguish between the following two cases:\\
\noindent{\bf Yes:} There exists a non-expanding set $S \subseteq V$ with $\mu(S) = \delta$ and $\Phi_G(S) \leq \eta$.\\
\noindent{\bf No:} All sets $S \subseteq V$ with $\mu(S) = \delta$ are highly expanding with $\Phi_G(S) \geq 1-\eta$.
\end{problem}

\begin{hypothesis}[Hardness of approximating \SSE]
For all $\eta >0$, there exists $\delta > 0$ such that the promise problem
\SSE $(\eta,\delta)$ is NP-hard.
\end{hypothesis}


\cite{raghavendra2010graph} showed that the Small Set Expansion Hypothesis implies the Unique Games Conjecture of Khot~\cite{khot2002power}. 
A decision problem is said to be SSE-hard if \SSE $(\eta,\delta)$ reduces to it by a polynomial time reduction for some constant $\eta$ and all $\delta > 0$.
Raghavendra, Steurer and Tulsiani~\cite{raghavendra2012reductions} showed the following hardness amplification result for graph expansion (see \hyperref[sec:prelim]{Preliminaries} for Gaussian Graphs definitions):
\begin{theorem} \label{th:amplification} 
For all $q \in \mathbb{N}$ and $\epsilon, \gamma >0$, it is SSE-hard to distinguish between the following two cases for
a given graph $H = (V_H, E_H)$:\\
\noindent\textbf{Yes:} There exist $q$ disjoint sets $S_1,..., S_q \subseteq V_H$ satisfying for all $l \in [q]$: $\mu(S_l)= 1/q$ and $\Phi_H(S_l) \le \epsilon + o(\epsilon)$.\\
\noindent\textbf{No:} For all sets $S\subseteq V_H$: $\Phi_H(S) \ge \Phi_{\mathcal{G}(1-\epsilon /2)}(\mu(S))- \gamma/\mu(S)$, where 
$\Phi_{\mathcal{G}(1-\epsilon /2)}(\mu(S))$ is the expansion of sets of volume $\mu(S)$ in the infinite Gaussian graph $\mathcal{G}(1-\epsilon /2)$.
\end{theorem}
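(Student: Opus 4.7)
The plan is to reduce from the basic promise problem \SSE$(\eta,\delta)$ for suitably chosen parameters. Given an instance $G=(V,E)$, I want to produce a graph $H$ whose gap is amplified in two directions: \emph{disjointness}, so that the Yes case contains $q$ disjoint small non-expanding pieces rather than just one, and \emph{quantitative tightness}, so that the No case matches the expansion profile of a Gaussian graph up to the $\gamma/\mu(S)$ slack.

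The construction I would use is a noisy product: let $H$ live on $V^k$ with edges corresponding to applying the noise operator of correlation $\rho=1-\epsilon/2$ on a product chain derived from $G$ (possibly after symmetrizing $G$ so that it admits a transitive action). For the Yes case, suppose $G$ has a non-expanding set $S$ with $\mu(S)=\delta$ and $\Phi_G(S)\le \eta$. Taking $q\approx 1/\delta$ disjoint translates of $S$ inside the transitive structure on $V^k$ yields $q$ disjoint sets $S_1,\ldots,S_q$ in $H$, each of measure $1/q$; since along each step only a $1-\rho$ fraction of mass effectively traverses a factor and $S$ is non-expanding in every coordinate, the expansion of $S_l$ in $H$ is at most $\epsilon+o(\epsilon)$, as required.

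For the No case, the central tool is the invariance principle of Mossel--O'Donnell--Oleszkiewicz combined with Borell's Gaussian noise-stability isoperimetric inequality. Bounding $\Phi_H(T)$ for an arbitrary $T\subseteq V^k$ reduces, by Fourier analysis on the product chain, to lower-bounding the noise stability of $\mathbf{1}_T$ under the noise operator with parameter $1-\epsilon/2$. By invariance, this stability is, up to an additive error absorbed into the $\gamma/\mu(T)$ term, at most the noise stability of an indicator of the same measure in $k$-dimensional Gaussian space, which by Borell's theorem is maximized by a halfspace. Translating back gives $\Phi_H(T)\ge \Phi_{\mathcal{G}(1-\epsilon/2)}(\mu(T))-\gamma/\mu(T)$.

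The main technical obstacle is the ``influential coordinates'' issue when invoking invariance: indicators of arbitrary sets need not be low-influence functions, so one must pre-process $H$ (for instance, by convolving with additional noise, passing to a random restriction, or handling highly influential coordinates by a separate case analysis that reduces back to a smaller instance of \SSE) so that the Fourier mass relevant for noise stability lives on low-influence, low-degree terms, without corrupting the Yes-case construction. Calibrating the starting measure $\delta$, the power $k$, the noise rate $\rho$, and the influence threshold so that disjointness, Gaussian tightness, and low influence all hold simultaneously is the delicate part of the argument; it is exactly this calibration that explains why the Small-Set Expansion Hypothesis, rather than merely Unique Games, is the natural starting assumption for this amplification.
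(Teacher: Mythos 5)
There is no in-paper proof to compare against here: the statement is quoted verbatim from Raghavendra, Steurer and Tulsiani \cite{raghavendra2012reductions}, and the paper uses it purely as a black box inside the hierarchical clustering hardness reduction (Theorem \ref{hardness}). So the justification the paper intends is a citation, and re-deriving the amplification theorem is outside its scope.

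Judged as a standalone argument, your sketch names the right toolbox (a noise/product construction, the invariance principle, Borell's noise-stability bound), which is indeed the general route of \cite{raghavendra2012reductions}, but the steps carrying the real difficulty are missing or do not work as stated. Concretely: (i) in the Yes case you need $q$ disjoint non-expanding sets of measure exactly $1/q$, whereas the Small-Set Expansion hypothesis only promises a single non-expanding set of measure $\delta$; producing the remaining sets as ``disjoint translates under a transitive action'' is not legitimate, since neither $G$ nor your graph on $V^k$ carries a vertex-transitive symmetry in general, and ``symmetrizing $G$'' is not an operation you can perform without risking the promise --- the known proof needs a separate structural step showing that SSE is equivalent to a version in which the whole vertex set can be partitioned into non-expanding sets. (ii) In the No case, passing from expansion in $H$ to Gaussian noise stability via invariance requires the indicator of an arbitrary candidate set to be of low influence; you explicitly flag this and defer it to an unspecified pre-processing or case analysis, but this is precisely the technical heart: one must show that influential coordinates would themselves decode to a small non-expanding set in the original instance, contradicting the No promise, and the calibration of $k$, the correlation $1-\epsilon/2$, and the additive slack $\gamma/\mu(S)$ all live in that step. (iii) The construction itself (``noise operator on a product chain derived from $G$'') is not specified precisely enough to even verify the claimed Yes-case bound $\epsilon + o(\epsilon)$. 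As written this is an outline of the known strategy rather than a proof; for this paper the correct move is simply to cite \cite{raghavendra2012reductions}.
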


\subsection{Hierarchical Clustering Hardness}

Now we are ready to prove our main hardness result. Our proof follows the argument of~\cite{raghavendra2012reductions} for establishing the hardness of {\sc Minimum Linear Arrangement}. We prove the following:

\begin{theorem} \label{hardness}

\textbf{(Hardness of Hierarchical Clustering).} 
For every $\epsilon > 0$, it is
$SSE$-hard to distinguish between the following two cases for a given graph $G = (V, E)$, with $|V| = n$:\\
\textbf{Yes:} There exists a decomposition tree $T$ of the graph such that 
$cost_G(T)\le \epsilon n |E|$\\
\textbf{No:} For any decomposition tree $T$ of the graph 
$cost_G(T) \ge c\sqrt{\epsilon}n |E|$.

\end{theorem}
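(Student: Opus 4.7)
The plan is to follow the template that Raghavendra, Steurer, and Tulsiani use for \MLA, and reduce directly from the amplified \SSE\ problem of Theorem~\ref{th:amplification}. Given an instance $H = (V_H, E_H)$ on $n$ vertices produced by that theorem with parameters $(q, \epsilon, \gamma)$, I would take the hierarchical clustering input to be $G \triangleq H$ (a regular graph of degree $d$, so $|E| = nd/2$). The parameters are tuned at the end: $q$ is a large integer depending on $\epsilon$, and $\gamma$ is chosen polynomially small in $n$ so that the slack $\gamma/\mu(S)$ in the NO expansion bound is negligible.

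\textbf{YES case.} Using the $q$ disjoint low-expansion sets $S_1, \ldots, S_q$ with $\mu(S_l) = 1/q$ and $\Phi_H(S_l) \le \epsilon + o(\epsilon)$, I would build a tree that carves off each $S_l$ one at a time via a chain of binary splits and then places an arbitrary binary tree inside each $S_l$. Each top-level split occurs at a cluster of size at most $n$ and cuts at most $(\epsilon + o(\epsilon))\,dn/q$ edges, so the total top-level contribution to $cost_G(T)$ is $O(\epsilon n^2 d) = O(\epsilon n|E|)$. Any binary tree inside $S_l$ pays at most $|E(S_l)| \cdot (n/q)$ since every edge pays at most the size of its surrounding cluster, and summing over $l$ gives $|E|\,n/q$. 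Choosing $q \ge \lceil 1/\epsilon \rceil$ makes both terms $O(\epsilon n|E|)$, matching the YES bound up to constants.

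\textbf{NO case.} For any tree $T$, the level-partition identity in Claim~2.1 applied to $T$ rather than OPT gives $cost_G(T) = \sum_{t=0}^{n-1} |E_T(t)|$, where $T(t)$ is the partition of $V$ into maximal clusters of $T$ of size at most $t$. Since each cluster $S \in T(t)$ satisfies $|S| \le t$, the NO condition of Theorem~\ref{th:amplification} combined with the standard asymptotic $\Phi_{\mathcal{G}(1-\epsilon/2)}(\mu) = \Omega(\sqrt{\epsilon \log(1/\mu)})$ for small $\mu$ yields $\Phi_H(S) \ge c_1\sqrt{\epsilon \log(n/|S|)} - \gamma n/|S|$. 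Translating expansion into cut size, $|E_T(t)| = \tfrac{d}{2}\sum_{S \in T(t)} |S|\,\Phi_H(S)$; using $\sqrt{\log(n/|S|)} \ge \sqrt{\log(n/t)}$ and $\sum_{S \in T(t)} |S| = n$ then lower-bounds the main term at level $t$ by $c_1(dn/2)\sqrt{\epsilon \log(n/t)}$. Summing over $t$ with the Riemann-sum estimate $\sum_{t=1}^{n-1} \sqrt{\log(n/t)} = \Theta(n)$ gives $cost_G(T) = \Omega(\sqrt{\epsilon}\, dn^2) = \Omega(\sqrt{\epsilon}\, n|E|)$, which is the claimed NO bound.

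The hard part will be bookkeeping the additive slack $\gamma n/|S|$, which blows up for tiny clusters, and handling the coarsest levels $t = O(1)$ where the Gaussian expansion estimate degenerates. I would dispose of these by taking $\gamma$ polynomially small in $n$ (so that the total slack $\sum_t d\gamma n \cdot |T(t)|/2$ is $o(n|E|)$ regardless of how many parts each $T(t)$ contains) and by discarding the $O(\log n)$ smallest levels, whose contribution to $\sum_t \sqrt{\log(n/t)}$ is only $o(n)$. The resulting gap between the YES bound $\epsilon n|E|$ and the NO bound $c\sqrt{\epsilon}\,n|E|$ is $\Omega(1/\sqrt{\epsilon})$, which tends to infinity as $\epsilon \to 0$, ruling out any constant-factor approximation under the \SSE\ hypothesis.
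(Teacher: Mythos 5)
Your YES case is essentially the paper's argument and is fine (the paper tunes $q=\lceil 2/\epsilon\rceil$, $\epsilon'=\epsilon/3$ to get the clean bound $\epsilon n|E|$, but that is only a matter of constants). The NO case, however, has a genuine gap. Your plan hinges on taking the slack parameter $\gamma$ ``polynomially small in $n$,'' but Theorem~\ref{th:amplification} does not permit this: $q$, $\epsilon$ and $\gamma$ there are constants fixed before the reduction (the reduction and its blow-up depend on them), so $\gamma$ cannot vanish with the instance size. With $\gamma$ a constant, your level-by-level accounting breaks exactly where you apply the NO-case guarantee to every cluster of the partition $T(t)$: the guarantee $\Phi_H(S)\ge \Phi_{\mathcal{G}(1-\epsilon/2)}(\mu(S))-\gamma/\mu(S)$ is vacuous for clusters of measure below roughly $\gamma$ (for singletons the subtracted term is $\gamma n$), and a tree may place a constant fraction of the vertices in such microscopic clusters at a constant fraction of the levels (e.g.\ a tree that peels off vertices one by one). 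Your claimed per-level bound $c_1(dn/2)\sqrt{\epsilon\log(n/t)}$ uses $\sum_{S\in T(t)}|S|=n$, i.e.\ it silently applies the expansion bound to those tiny clusters as well; if instead you restrict to clusters above a constant measure threshold, the surviving mass at a level can be far below $n$, and if you do not restrict, the aggregate slack is of order $\gamma d\, n\sum_t|T(t)|$, which can reach $\Theta(\gamma d n^3)$ and swamps the target $\Theta(\sqrt{\epsilon}\,d n^2)$. Discarding ``the $O(\log n)$ smallest levels'' does not address this, since the offending clusters can appear at every level. (A secondary, more easily repaired issue: $\Phi_{\mathcal{G}(1-\epsilon/2)}(\mu)=\Omega(\sqrt{\epsilon\log(1/\mu)})$ is false both for very small $\mu$, where the right side exceeds $1$, and for $\mu$ close to $1$; one must cap the bound and restrict to mid-range levels.)

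This small-measure difficulty is precisely what the Raghavendra--Steurer--Tulsiani argument for \MLA\ (via the observation of Devanur et al.) is designed to overcome, by working with intervals of a fixed constant measure in an ordering rather than with adversarially sized parts. The paper exploits that work as a black box: given any decomposition tree $T$, order the vertices as the leaves appear in $T$; then every edge $(u,v)$ has stretch at most $|\mathbf{leaves}(T[u\lor v])|$, so $cost_G(T)\ge\sum_{(u,v)\in E}|\pi(u)-\pi(v)|$, and the NO-case conclusion of the \MLA\ hardness (every ordering has average stretch at least $c\sqrt{\epsilon}\,n$) immediately gives $cost_G(T)\ge c\sqrt{\epsilon}\,n|E|$. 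To salvage your direct level-decomposition route you would effectively have to reprove that interval/threshold argument; as written, the step from the amplification theorem to the per-level lower bound does not go through.
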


\begin{proof} 
We apply \hyperref[th:amplification]{Theorem \ref{th:amplification}} for the following values: $q = \lceil 2/\epsilon \rceil, \epsilon' = \epsilon/3$ and $\gamma = \epsilon$.
We need to first handle the \textbf{Yes} case.
We get that the vertices can be divided into sets $S_1, S_2, ..., S_q$, each having size $n/q = n\epsilon/2$, such that at most $\epsilon' + o(\epsilon')$ fraction of edges leave the sets (i.e. go across sets). Now consider the hierarchical clustering solution that first partitions the vertices into the sets $S_1, S_2, ..., S_q$ and then partitions each $S_i$ arbitrarily. 
Edges inside the set $S_i$ contribute at most $|S_i|$ to the objective function 
and this is $|S_i|=n/q=\epsilon n/2$. Moreover, edges whose endpoints are in different sets will have contribution at most $n$; but there are at most $\epsilon/2$ fraction of such edges and so the overall objective for this hierarchical clustering solution is at most $\epsilon n |E|$. 

Now, we handle the \textbf{No} case by using the argument of~\cite{raghavendra2012reductions} for \MLA \ that follows from an observation of~\cite{devanur2006integrality} and the fact that the objective function of \MLA \ is always less than the cost of Hierarchical Clustering. To see the latter, observe that if we have a hierarchical clustering tree $T$ then consider the ordering of the vertices induced by the order that they appear as leaves in $T$ (like projecting the leaves to a line). Then, the stretch of an edge $(u,v)$ that is cut, can be at most the size of the subtree that corresponds to that edge and this is exactly the quantity: $|\textbf{leaves}(T[u\lor v])|$. Since we know (\cite{raghavendra2012reductions, devanur2006integrality}) that in the \textbf{No} case, for all orderings $\pi: V \rightarrow [n], \mathbb{E}_{(u,v)\sim E} [|\pi(u)-\pi(v)|]     \ge c\sqrt{\epsilon}n$, it immediately follows that:
$cost_G(T) \ge c\sqrt{\epsilon}n |E|$. 

\end{proof}

\section{Approximation for HC using SDP} \label{sec:sdp-approx}


In this section, we present our SDP relaxation for HC based on spreading metrics, we point out its relation with the SDP relaxation of $k$-\textit{balanced partitioning} in~\cite{krauthgamer2009partitioning} and we prove that it is an $O(\sqrt{\log n})$ approximation for both the simple and the generalized cost function.

\subsection{Writing the SDP}
We view a hierarchical clustering of $n$ data points as
a collection of partitions of the data, one for each level $t = n-1,\ldots,1$.
The partition for a particular level $t$ satisfies the property that every cluster has size at most $t$; additionally, for every vertex $i$, the cluster containing vertex $i$ at level $t$ is the maximal cluster in the hierarchy with size at most $t$.
The partition at level ($t-1$) is a refinement of the partition at level $t$.
Note that the partition corresponding to $t=1$ must consist of $n$ singleton clusters.
We represent the partition at level $t$ by the set of variables
$x_{ij}^t$, $i,j \in V$, where $x_{ij}^t = 1$ if $i$ and $j$ are in different clusters in the partition at level $t$ and $x_{ij}^t = 0$ if $i$ and $j$ are in the same cluster.
We point out some properties of these variables 
$x_{ij}^t$ satisfied by an integer solution corresponding to an actual hierarchical clustering:
\begin{enumerate}
\item {\bf refinement:} $x_{ij}^t \leq x_{ij}^{t-1}$. If $i$ and $j$ are separated at level $t$, then they continue to be separated at level $t-1$.
\item {\bf triangle inequality:} $x_{ij}^t + x_{jk}^t \geq x_{ik}^t$. In the clustering at level $t$, if $i$ and $j$ are the same cluster, $j$ and $k$ are in the same cluster, then $i$ and $k$ are in the same cluster.
\item {\bf $\ell_2^2$ metric:} The triangle inequality condition implies that $x_{ij}^t$ is a metric. Further, we can associate unit vectors $v_i^t$ with vertices $i$ at level $t$ such that
$x_{ij}^t = \frac{1}{2}||v_i^t - v_j^t||_2^2$. In order to do this, all vertices in the same cluster at level $t$ are assigned the same vector, and vertices in different clusters are assigned orthogonal vectors.
\item {\bf spreading:} $\sum_j x_{ij}^t \geq n-t$. For the clustering at level $t$, there are at most $t$ vertices in the same cluster as $i$. Hence there are at least $n-t$ vertices in different clusters. For each such vertex $j$, $x_{ij}^t = 1$ implying the inequality.
\item {\bf cluster size:} The size of the smallest cluster in the hierarchy containing both vertices $i$ and $j$ is given by $1+\sum_{t=1}^{n-1} x_{ij}^t$. Suppose $C$ is the smallest cluster containing both $i$ and $j$. Then for $t \geq |C|$, the partition at level $t$ must contain $C$ or some superset of $C$. Hence $x_{ij}^t=0$ for $t \geq |C|$. For $t < |C|$, the clustering at level $t$ must have $i$ and $j$ in different clusters, hence $x_{ij}^t = 1$.
Hence $\sum_{t=1}^{n-1} x_{ij}^t = |C|-1$. Finally, we can write the SDP relaxation SDP-HC as follows:
\end{enumerate}
\[
\min \sum_{t=1}^{n-1} \sum_{ij \in E} x_{ij}^tw_{ij}= \min \sum_{t=1}^{n-1} \sum_{ij \in E} \dfrac{1}{2} \lVert v_i^t-v_j^t \rVert_2^2w_{ij}  \tag{SDP-HC} 
\]
\[
\text{\ such that: \ \ } x_{ij}^{t} \le x_{ij}^{t-1}, \mbox{\ \ \ } t=n-1,n-2,...1\\ \]
\[
x_{ij}^{0} =1, \forall i,j \in V \ \text{and}\ x_{ij}^t \le 1, \forall i,j,t \]
\[x_{ij}^t =\dfrac{1}{2} \lVert v_i^t-v_j^t \rVert_2^2 \ \text{and}\ \lVert v_i^t \rVert_2^2 =1, \forall i \in V\\ \]
\[x_{ij}^t \le x_{jk}^t+ x_{ik}^t, \forall i,j,k \in V, \ \forall t \ \text{and}\ \sum_{j}x_{ij}^t \ge n-t, \forall i,t  \]

It is easy to see that an optimal solution to SDP-HC can be computed in polynomial time.
By the preceding discussion, we have shown that SDP-HC is a valid relaxation for HC:

\begin{lemma}
The value of an optimal solution to SDP-HC can be computed in polynomial time, and gives  a lower bound on the cost of an optimal solution to the hierarchical clustering problem.
\end{lemma}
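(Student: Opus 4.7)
The plan is to prove the two assertions separately, and neither is expected to be difficult. For polynomial-time solvability, note that SDP-HC has $O(n^3)$ scalar variables $x_{ij}^t$ together with, at each of the $n-1$ levels $t$, a Gram matrix $M^t \in \mathbb{R}^{n\times n}$ encoding the vectors $v_i^t$ via $M_{ij}^t = \langle v_i^t, v_j^t\rangle$. The vector identity $x_{ij}^t = \tfrac12 \lVert v_i^t - v_j^t\rVert_2^2$ rewrites as the linear equality $x_{ij}^t = \tfrac12 (M_{ii}^t + M_{jj}^t - 2M_{ij}^t)$, and every remaining constraint (refinement, triangle inequality, spreading, unit-norm, and the bounds $0 \le x_{ij}^t \le 1$) is explicitly linear in the entries of $M^t$. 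Combined with the $n-1$ semidefiniteness constraints $M^t \succeq 0$, this is a standard SDP of polynomial size, solvable to arbitrary accuracy in polynomial time by interior-point or ellipsoid methods.

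For the relaxation claim, I would show that every hierarchical clustering tree $T$ induces a feasible SDP solution whose objective value is at most $cost_G(T)$; choosing $T$ to be an optimal HC tree then yields SDP-HC $\le$ HC-OPT. Concretely, for each level $t$ let $C_1^t, \ldots, C_{k_t}^t$ denote the maximal clusters of $T$ of size at most $t$ (which form a partition of $V$), pick an orthonormal basis $e_1,\ldots,e_{k_t}$ of $\mathbb{R}^{k_t}$, and set $v_i^t := e_{c(i,t)}$ where $c(i,t)$ is the index of the cluster containing $i$. The induced values $x_{ij}^t = \tfrac12\lVert v_i^t - v_j^t\rVert_2^2 \in \{0,1\}$ are indicators of the event ``$i$ and $j$ lie in different level-$t$ clusters''.

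The remaining work is checking feasibility, which reduces to a routine case analysis driven by the cluster structure: refinement holds because the level-$(t-1)$ partition refines the level-$t$ partition; triangle inequality holds by transitivity of lying in the same cluster (for $\{0,1\}$ equivalence-relation indicators, the only potential violation $x_{ij}^t = x_{jk}^t = 0$ with $x_{ik}^t = 1$ cannot occur); spreading holds because $i$'s level-$t$ cluster has size at most $t$, so $\sum_j x_{ij}^t$ counts at least $n-t$ vertices outside that cluster; and the unit-norm, boundary, and vector identities hold by construction. To bound the objective, write $C_{ij}$ for the smallest cluster of $T$ containing both $i$ and $j$. Then $x_{ij}^t = 1$ iff $t < |C_{ij}|$, so $\sum_{t=1}^{n-1} x_{ij}^t = |C_{ij}| - 1$ and the SDP objective becomes $\sum_{ij \in E} w_{ij}(|C_{ij}|-1) \le cost_G(T)$.

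I do not expect any real obstacle; the argument is essentially a routine verification. The one minor subtlety worth flagging is the off-by-one between $\sum_{t=1}^{n-1}x_{ij}^t = |C_{ij}| - 1$ and the $|C_{ij}|$ appearing in $cost_G(T)$, which costs an additive $\sum_{ij\in E} w_{ij} \ge 0$ term and so does not disturb the desired inequality SDP-HC $\le$ HC-OPT.
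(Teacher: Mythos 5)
Your proposal is correct and follows essentially the same route as the paper: the paper's proof is precisely the "preceding discussion" of properties 1--5 (refinement, triangle inequality, orthogonal unit vectors per cluster, spreading, and the cluster-size identity $\sum_t x_{ij}^t = |C_{ij}|-1$), which is exactly the integral solution you construct, together with the standard observation that the program is a polynomial-size SDP. Your write-up just makes explicit the Gram-matrix formulation and the harmless off-by-one slack $\sum_{ij\in E} w_{ij}$, both of which the paper leaves implicit.
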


\subsection{Connections of SDP-HC with Balanced Partitioning}
The authors of~\cite{krauthgamer2009partitioning} write an SDP relaxation for the problem of $k$-Balanced Partitioning ($k$-BP) which was the following (SDP-$k$-BP):
\[
\min \sum_{ij \in E} w_{ij}\cdot \dfrac{1}{2}\lVert v_i-v_j\rVert_2^2 \tag{SDP-$k$-BP} \label{SDP-$k$-BP}
\]
\[ \text{\ such that:\ \ } \lVert v_i-v_j \rVert_2^2+\lVert v_j-v_k \rVert_2^2 \ge \lVert v_i-v_k \rVert_2^2, \ \forall i,j,k \in V\\ \]
\[\sum_{j\in S} \dfrac{1}{2}\lVert v_i-v_j\rVert_2^2 \ge |S|-\dfrac{n}{k}, \ \forall S\subseteq V, i \in S \]

\noindent Their result was that the above relaxation is an $O(\sqrt{\log k \log n})$ approximation (bi-criteria $\nu=2$) algorithm for $k$-BP, that will create pieces of size at most $2n/k$.

\begin{claim}
Let $A$ be a cluster of size $r$. SDP-HC solution restricted to set $A$, at level $t=r/4$ is a valid solution for $k$-balanced partitioning based on the \ref*{SDP-$k$-BP} relaxation, where $k=4$.
\end{claim}

\begin{proof}
See \hyperref[sec:omitted proofs]{Appendix \ref*{sec:omitted proofs}}.
\end{proof}

In order to produce a hierarchical clustering from the SDP solution, we recursively partition $V$ in a top down fashion: while partitioning a cluster $A$, we use the SDP-HC solution restricted to set $A$ at level $t=|A|/4$ as a valid solution for $4$-balanced partitioning and invoke the algorithm of \cite{krauthgamer2009partitioning} as a black box.
Let $E_A$ be the edges cut by the algorithm when splitting cluster $A$. From the analysis of \cite{krauthgamer2009partitioning} , 
we get that (for us $k=4$, so $\log k$ is constant):
\[ 
w(E_A) \le O(\sqrt{\log n})\cdot SDP_A(r/4) \addtag \label{eq:KNS}
\]
and we partition $A$ into pieces of size at most $\le 2\cdot r/4=r/2$ (bi-criteria).
In the analysis that follows we will use this result as a black box.

\subsection{$O(\sqrt{\log n})$ approximation for Hierarchical Clustering}

Now we go on to see that the integrality gap of our SDP-HC is $O(\sqrt{\log n})$. 
%
Let $r$ be the size of a cluster $A$ in the solution produced. For our charging argument, we observe that the we pay $r\cdot w(E_A)$ where $E_A$ are the edges cut by the \cite{krauthgamer2009partitioning} algorithm when partitioning $A$. We will charge this cost to $\sum_{t=r/8+1}^{r/4}SDP_A(t)\ge \dfrac{r}{8} SDP_A(r/4)$ (note that as $t$ decreases more edges are cut). Thus, using~\cite{krauthgamer2009partitioning}, the total cost of the solution produced (where $r$ depends on $A$):

\[
cost_{HC} = \sum_{A} r\cdot w(E_A)\le O(\sqrt{\log n}) \sum_A \sum_{t=r/8+1}^{r/4}SDP_A(t) \addtag \label{eq:opt-sdp}.
\]

\begin{claim}
$\sum_{A} \sum_{t=|A|/8+1}^{|A|/4} SDP_A(t) \le O$(SDP-HC). \label{claim:sdp-claim}
\end{claim}

\begin{proof}
See \hyperref[sec:omitted proofs]{Appendix \ref*{sec:omitted proofs}}.
\end{proof}
\begin{theorem}
The cost of the solution produced by the SDP-HC rounding algorithm is within a factor of $O(\sqrt{\log n})$ from the SDP value.
\end{theorem}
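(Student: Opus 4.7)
The plan is to combine the two inequalities already set up in this section, namely display (\ref{eq:opt-sdp}) and Claim~\ref{claim:sdp-claim}, so that the proof itself is short. For every cluster $A$ of size $r$ that the algorithm decides to split, an edge cut when splitting $A$ contributes exactly $r \cdot w_{ij}$ to $cost_G(T)$, since the lowest common ancestor of its endpoints has subtree of size $r$; hence the total hierarchical-clustering cost of the produced tree equals $\sum_A r \cdot w(E_A)$. Restricting the SDP-HC solution to $A$ at level $r/4$ yields (by the previous claim) a valid \ref{SDP-$k$-BP} solution with $k=4$, and applying the \cite{krauthgamer2009partitioning} rounding gives $w(E_A) \le O(\sqrt{\log n}) \cdot SDP_A(r/4)$ together with pieces of size at most $r/2$ (so that the recursion is well-defined). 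Using the monotonicity $SDP_A(t) \ge SDP_A(r/4)$ for $t \le r/4$, the term $r \cdot SDP_A(r/4)$ is absorbed into $O(1) \cdot \sum_{t=r/8+1}^{r/4} SDP_A(t)$, which is exactly inequality (\ref{eq:opt-sdp}).

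The second step is to feed this bound into Claim~\ref{claim:sdp-claim}, which controls the double sum $\sum_A \sum_{t=|A|/8+1}^{|A|/4} SDP_A(t)$ by $O(1) \cdot \text{SDP-HC}$. Combining the two inequalities yields
\[
cost_{HC} \;\le\; O(\sqrt{\log n}) \sum_A \sum_{t=|A|/8+1}^{|A|/4} SDP_A(t) \;\le\; O(\sqrt{\log n}) \cdot \text{SDP-HC},
\]
which is the claimed integrality gap. Since SDP-HC is a valid lower bound on the optimal hierarchical clustering cost, the same quantity also bounds the approximation ratio of the rounding algorithm.

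The real content of the argument lies in Claim~\ref{claim:sdp-claim}, and the main obstacle is the double-counting issue that it resolves. An edge $e$ may appear in the restriction $SDP_A(t)$ for several nested clusters $A$ along the recursion, so one must verify that the aggregate charge per edge is constant. The key observation is that only clusters with $2t < |A| \le 4t$ can contribute to the level-$t$ summand, and along a root-to-leaf chain of such clusters containing $e$ the sizes shrink geometrically, exactly as in the telescoping bound (\ref{eq:gen-six}) used in the RSC analysis. This limits the contribution per edge at each level to $O(1)$ times its SDP weight, which is what produces the constant in Claim~\ref{claim:sdp-claim}. Once this charging is in hand, the theorem is immediate from the two displayed inequalities above.
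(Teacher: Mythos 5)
Your proposal is correct and follows essentially the same route as the paper: the theorem is exactly the combination of (\ref{eq:opt-sdp}) with Claim~\ref{claim:sdp-claim}, using the bi-criteria guarantee $w(E_A)\le O(\sqrt{\log n})\cdot SDP_A(r/4)$ of \cite{krauthgamer2009partitioning} and the monotonicity of $SDP_A(t)$ in $t$ to absorb the factor $r$ into the window $t\in(r/8,r/4]$. Two small inaccuracies in your side remarks about Claim~\ref{claim:sdp-claim}: the clusters contributing the level-$t$ term are those with $4t\le |A| < 8t$ (since the inner sum runs over $t\in(|A|/8,|A|/4]$), not $2t<|A|\le 4t$; and the per-edge, per-level charge is bounded not by the telescoping sum (\ref{eq:gen-six}) but by the simpler observation that at most one cluster on the nested chain containing the edge can lie in this factor-$2$ size window, because each split produces pieces of size at most $|A|/2$.
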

\begin{proof}
Using \hyperref[claim:sdp-claim]{Claim~\ref{claim:sdp-claim}} and (\ref*{eq:opt-sdp}) we get that $cost_{OPT}\le O(\sqrt{\log n})\cdot$SDP-HC.
\end{proof}

\subsection{The case of the generalized cost function}

Now, we consider the performance of SDP-HC-gen (where SDP-HC-gen is essentially the same as SDP-HC where actually we multiply each term in the objective function by $g(t)=f(t+1)-f(t)$) for the generalized cost function and we show essentially the same guarantee (for the proof, see \hyperref[sec:omitted proofs]{Appendix \ref*{sec:omitted proofs}}):
\begin{theorem}
The cost of the solution produced by the SDP-HC-gen rounding algorithm is within a factor of $O(\sqrt{\log n}\cdot c_f)$ from the 
SDP value where $c_f \triangleq \max_{r\in \{1,...,n\}}\tiny \dfrac{f(r)}{f(r/4)-f(r/8)}$.
\end{theorem}




\section{Conclusion and Further Research}\label{sec:conclusion}
We proved that the recently introduced objective function for hierarchical clustering in~\cite{dasguptaSTOC16}, can be approximated within a factor of $O(\alpha_n)$ by repeatedly taking ($\alpha_n$ approximations to) Sparsest Cuts and within $O(\sqrt{\log n})$ using a spreading metric SDP relaxation. We also proved that it is hard to approximate the HC objective function within any constant factor assuming the \textit{Small Set Expansion Hypothsesis}, which was the first strong inapproximability for the problem to the best of our knowledge. We finally presented an LP based $O(\log n)$ approximation by showing that HC falls into the spreading metrics paradigm of~\cite{even2000divide}.
 
We would like to conclude the paper asking if we can do even better for this particular problem. The reason why we might face difficulties in improving the approximation guarantee may have to do with the much more basic problem of Minimum Balanced Bisection. It seems implausible that we would get a better approximation for hierarchical clustering before getting an improvement in the current best approximation guarantee for Balanced Bisection. It is also interesting to try to come up with other suitable cost functions for HC, apart from those considered here.

Another direction for research is that of \textit{beyond worst case analysis}. What can we say about exact recovery on $\gamma$-stable instances under Bilu-Linial~\cite{bilu2012stable} notion of stability? For example, in~\cite{makarychevSODA14} they show that the standard SDP relaxation
for \textsc{Max-Cut} is integral if the instance is sufficiently stable ($\gamma \ge c\sqrt{\log n}\log n$ for some
absolute constant $c>0$). It would be nice to formalize and say something similar for our problem, since this would not only explain the success of certain heuristics for HC based on finding sparsest cuts, but also justify their use in practice, assuming that in real applications most instances are stable (such an assumption for clustering problem is widely accepted; for more see~\cite{balcan2012clustering, bilu2012stable, makarychevSODA14} and references therein). Finally, we also find interesting the scenario where the input graph is drawn from a probability distribution for which there is a truly hierarchical structure. Can we then prove that a suitable SDP relaxation will indeed find a hierarchical structure close to the actual underlying hierarchy with high probability?


\clearpage
\bibliographystyle{alpha}
\bibliography{main}

\begin{thebibliography}{DKSV06}

\bibitem[Bar04]{bartal2004graph}
Yair Bartal.
\newblock Graph decomposition lemmas and their role in metric embedding
  methods.
\newblock In {\em European Symposium on Algorithms}, pages 89--97. Springer,
  2004.

\bibitem[Das16]{dasguptaSTOC16}
Sanjoy Dasgupta.
\newblock A cost function for similarity-based hierarchical clustering.
\newblock In {\em Proceedings of the 48th Annual ACM SIGACT Symposium on Theory
  of Computing}, STOC 2016, pages 118--127, New York, NY, USA, 2016. ACM.

\bibitem[DKSV06]{devanur2006integrality}
Nikhil~R Devanur, Subhash~A Khot, Rishi Saket, and Nisheeth~K Vishnoi.
\newblock Integrality gaps for sparsest cut and minimum linear arrangement
  problems.
\newblock In {\em Proceedings of the thirty-eighth annual ACM symposium on
  Theory of computing}, pages 537--546. ACM, 2006.

\bibitem[ENRS00]{even2000divide}
Guy Even, Joseph~Seffi Naor, Satish Rao, and Baruch Schieber.
\newblock Divide-and-conquer approximation algorithms via spreading metrics.
\newblock {\em Journal of the ACM (JACM)}, 47(4):585--616, 2000.

\bibitem[Kho02]{khot2002power}
Subhash Khot.
\newblock On the power of unique 2-prover 1-round games.
\newblock In {\em Proceedings of the thiry-fourth annual ACM symposium on
  Theory of computing}, pages 767--775. ACM, 2002.

\bibitem[KNS09]{krauthgamer2009partitioning}
Robert Krauthgamer, Joseph~Seffi Naor, and Roy Schwartz.
\newblock Partitioning graphs into balanced components.
\newblock In {\em Proceedings of the twentieth Annual ACM-SIAM Symposium on
  Discrete Algorithms}, pages 942--949. Society for Industrial and Applied
  Mathematics, 2009.

\bibitem[MMV14]{makarychevSODA14}
Konstantin Makarychev, Yury Makarychev, and Aravindan Vijayaraghavan.
\newblock Bilu-linial stable instances of max cut and minimum multiway cut.
\newblock In {\em Proceedings of the Twenty-Fifth Annual ACM-SIAM Symposium on
  Discrete Algorithms}, SODA '14, pages 890--906, Philadelphia, PA, USA, 2014.
  Society for Industrial and Applied Mathematics.

\bibitem[RS10]{raghavendra2010graph}
Prasad Raghavendra and David Steurer.
\newblock Graph expansion and the unique games conjecture.
\newblock In {\em Proceedings of the forty-second ACM symposium on Theory of
  computing}, pages 755--764. ACM, 2010.

\bibitem[RST10]{raghavendra2012reductions}
Prasad Raghavendra, David Steurer, and Madhur Tulsiani.
\newblock Reductions between expansion problems.
\newblock In {\em Computational Complexity (CCC), 2012 IEEE 27th Annual
  Conference on}, pages 64--73. IEEE, 2010.

\end{thebibliography}


\begin{thebibliography}{LNRW10}

\bibitem[AMS11]{ambuhl2011inapproximability}
Christoph Amb{\"u}hl, Monaldo Mastrolilli, and Ola Svensson.
\newblock Inapproximability results for maximum edge biclique, minimum linear
  arrangement, and sparsest cut.
\newblock {\em SIAM Journal on Computing}, 40(2):567--596, 2011.

\bibitem[ARV09]{arora2009expander}
Sanjeev Arora, Satish Rao, and Umesh Vazirani.
\newblock Expander flows, geometric embeddings and graph partitioning.
\newblock {\em Journal of the ACM (JACM)}, 56(2):5, 2009.

\bibitem[Bar04]{bartal2004graph}
Yair Bartal.
\newblock Graph decomposition lemmas and their role in metric embedding
  methods.
\newblock In {\em European Symposium on Algorithms}, pages 89--97. Springer,
  2004.

\bibitem[BL12a]{balcan2012clustering}
Maria~Florina Balcan and Yingyu Liang.
\newblock Clustering under perturbation resilience.
\newblock In {\em International Colloquium on Automata, Languages, and
  Programming}, pages 63--74. Springer, 2012.

\bibitem[BL12b]{bilu2012stable}
Yonatan Bilu and Nathan Linial.
\newblock Are stable instances easy?
\newblock {\em Combinatorics, Probability and Computing}, 21(05):643--660,
  2012.

\bibitem[BLG14]{balcan2014robust}
Maria-Florina Balcan, Yingyu Liang, and Pramod Gupta.
\newblock Robust hierarchical clustering.
\newblock {\em Journal of Machine Learning Research}, 15:3831, 2014.

\bibitem[CDKL14]{chaudhuri2014consistent}
Kamalika Chaudhuri, Sanjoy Dasgupta, Samory Kpotufe, and Ulrike Luxburg.
\newblock Consistent procedures for cluster tree estimation and pruning.
\newblock {\em IEEE Transactions on Information Theory}, 60(12):7900--7912,
  2014.

\bibitem[CGW03]{charikar2003clustering}
Moses Charikar, Venkatesan Guruswami, and Anthony Wirth.
\newblock Clustering with qualitative information.
\newblock In {\em Foundations of Computer Science, 2003. Proceedings. 44th
  Annual IEEE Symposium on}, pages 524--533. IEEE, 2003.

\bibitem[CHKR06]{charikar20062}
Moses Charikar, Mohammad~Taghi Hajiaghayi, Howard Karloff, and Satish Rao.
\newblock $l_2^2$ spreading metrics for vertex ordering problems.
\newblock In {\em Proceedings of the seventeenth annual ACM-SIAM symposium on
  Discrete algorithm}, pages 1018--1027. Society for Industrial and Applied
  Mathematics, 2006.

\bibitem[Das16]{dasguptaSTOC16}
Sanjoy Dasgupta.
\newblock A cost function for similarity-based hierarchical clustering.
\newblock In {\em Proceedings of the 48th Annual ACM SIGACT Symposium on Theory
  of Computing}, STOC 2016, pages 118--127, New York, NY, USA, 2016. ACM.

\bibitem[DKSV06]{devanur2006integrality}
Nikhil~R Devanur, Subhash~A Khot, Rishi Saket, and Nisheeth~K Vishnoi.
\newblock Integrality gaps for sparsest cut and minimum linear arrangement
  problems.
\newblock In {\em Proceedings of the thirty-eighth annual ACM symposium on
  Theory of computing}, pages 537--546. ACM, 2006.

\bibitem[DL05]{dasgupta2005performance}
Sanjoy Dasgupta and Philip~M Long.
\newblock Performance guarantees for hierarchical clustering.
\newblock {\em Journal of Computer and System Sciences}, 70(4):555--569, 2005.

\bibitem[EBW15]{eldridge2015beyond}
Justin Eldridge, Mikhail Belkin, and Yusu Wang.
\newblock Beyond hartigan consistency: merge distortion metric for hierarchical
  clustering.
\newblock 2015.

\bibitem[ENRS99]{even1999fast}
Guy Even, Joseph Naor, Satish Rao, and Baruch Schieber.
\newblock Fast approximate graph partitioning algorithms.
\newblock {\em SIAM Journal on Computing}, 28(6):2187--2214, 1999.

\bibitem[ENRS00]{even2000divide}
Guy Even, Joseph~Seffi Naor, Satish Rao, and Baruch Schieber.
\newblock Divide-and-conquer approximation algorithms via spreading metrics.
\newblock {\em Journal of the ACM (JACM)}, 47(4):585--616, 2000.

\bibitem[FL07]{feige2007improved}
Uriel Feige and James~R Lee.
\newblock An improved approximation ratio for the minimum linear arrangement
  problem.
\newblock {\em Information Processing Letters}, 101(1):26--29, 2007.

\bibitem[GVY93]{garg1993approximate}
Naveen Garg, Vijay~V Vazirani, and Mihalis Yannakakis.
\newblock Approximate max-flow min-(multi) cut theorems and their applications.
\newblock In {\em Proceedings of the twenty-fifth annual ACM symposium on
  Theory of computing}, pages 698--707. ACM, 1993.

\bibitem[Har85]{Hartigan1985}
J.~A. Hartigan.
\newblock Statistical theory in clustering.
\newblock {\em Journal of Classification}, 2(1):63--76, 1985.

\bibitem[HTF09]{hastie2009}
Trevor Hastie, Robert Tibshirani, and Jerome Friedman.
\newblock {\em The Elements of Statistical Learning}.
\newblock Springer, 2nd edition, 2009.

\bibitem[JD88]{jain1988algorithms}
Anil~K Jain and Richard~C Dubes.
\newblock {\em Algorithms for clustering data}.
\newblock Prentice-Hall, Inc., 1988.

\bibitem[JMF99]{jain1999data}
Anil~K Jain, M~Narasimha Murty, and Patrick~J Flynn.
\newblock Data clustering: a review.
\newblock {\em ACM computing surveys (CSUR)}, 31(3):264--323, 1999.

\bibitem[Kho02]{khot2002power}
Subhash Khot.
\newblock On the power of unique 2-prover 1-round games.
\newblock In {\em Proceedings of the thiry-fourth annual ACM symposium on
  Theory of computing}, pages 767--775. ACM, 2002.

\bibitem[KNS09]{krauthgamer2009partitioning}
Robert Krauthgamer, Joseph~Seffi Naor, and Roy Schwartz.
\newblock Partitioning graphs into balanced components.
\newblock In {\em Proceedings of the twentieth Annual ACM-SIAM Symposium on
  Discrete Algorithms}, pages 942--949. Society for Industrial and Applied
  Mathematics, 2009.

\bibitem[LNRW10]{lin2010general}
Guolong Lin, Chandrashekhar Nagarajan, Rajmohan Rajaraman, and David~P
  Williamson.
\newblock A general approach for incremental approximation and hierarchical
  clustering.
\newblock {\em SIAM Journal on Computing}, 39(8):3633--3669, 2010.

\bibitem[LR88]{leighton1988approximate}
Tom Leighton and Satish Rao.
\newblock An approximate max-flow min-cut theorem for uniform multicommodity
  flow problems with applications to approximation algorithms.
\newblock In {\em Foundations of Computer Science, 1988., 29th Annual Symposium
  on}, pages 422--431. IEEE, 1988.

\bibitem[LR99]{leighton1999multicommodity}
Tom Leighton and Satish Rao.
\newblock Multicommodity max-flow min-cut theorems and their use in designing
  approximation algorithms.
\newblock {\em Journal of the ACM (JACM)}, 46(6):787--832, 1999.

\bibitem[MMV14]{makarychevSODA14}
Konstantin Makarychev, Yury Makarychev, and Aravindan Vijayaraghavan.
\newblock Bilu-linial stable instances of max cut and minimum multiway cut.
\newblock In {\em Proceedings of the Twenty-Fifth Annual ACM-SIAM Symposium on
  Discrete Algorithms}, SODA '14, pages 890--906, Philadelphia, PA, USA, 2014.
  Society for Industrial and Applied Mathematics.

\bibitem[Pla03]{plaxton2003approximation}
C~Greg Plaxton.
\newblock Approximation algorithms for hierarchical location problems.
\newblock In {\em Proceedings of the thirty-fifth annual ACM symposium on
  Theory of computing}, pages 40--49. ACM, 2003.

\bibitem[RP16]{pokuttaNIPS16}
Aurko Roy and Sebastian Pokutta.
\newblock Hierarchical clustering via spreading metrics.
\newblock 2016.

\bibitem[RS10]{raghavendra2010graph}
Prasad Raghavendra and David Steurer.
\newblock Graph expansion and the unique games conjecture.
\newblock In {\em Proceedings of the forty-second ACM symposium on Theory of
  computing}, pages 755--764. ACM, 2010.

\bibitem[RST10]{raghavendra2012reductions}
Prasad Raghavendra, David Steurer, and Madhur Tulsiani.
\newblock Reductions between expansion problems.
\newblock In {\em Computational Complexity (CCC), 2012 IEEE 27th Annual
  Conference on}, pages 64--73. IEEE, 2010.

\bibitem[She09]{sherman2009breaking}
Jonah Sherman.
\newblock Breaking the multicommodity flow barrier for o (vlog
  n)-approximations to sparsest cut.
\newblock In {\em Foundations of Computer Science, 2009. FOCS'09. 50th Annual
  IEEE Symposium on}, pages 363--372. IEEE, 2009.

\end{thebibliography}

\clearpage

\appendix
\section{\Large Preliminaries \label{sec:prelim}}

Here, we would like to briefly discuss some important problems and definitions that will frequently come up in the rest of the paper. Some additional definitions and facts may be presented in the sections for which they are relevant.

\SC. Given a weighted, undirected graph $G=(V,E,w)$ ($|V|=n$) we want to find a set $S\neq \emptyset, V$ that minimizes the ratio:
\[
\dfrac{w(S,V\setminus S)}{|S|\cdot |V\setminus S|}
\]
It is an NP-hard problem for which many important results are known including the LP relaxation of Leighton-Rao~\cite{leighton1999multicommodity} with approximation ratio $O(\log n)$ and the SDP relaxation with triangle
inequality of Arora, Rao, Vazirani~\cite{arora2009expander} with approximation ratio $O(\sqrt{\log n})$; it is a major open question if we can improve this approximation ratio.

\SSE. SSE is a hardness assumption that informally tells us the following: Given a graph $G$, it should be hard to distinguish between the case where there exists a small set $S$ that has only a few edges leaving it versus the case where for all small sets $S$ there are many edges leaving the sets. For a formal statement see \hyperref[sec:hardness]{Section \ref{sec:hardness}}. This hardness assumption is closely connected to the Unique Games Conjecture (UGC) of \cite{khot2002power} and its variants. In particular, the SSE Hypothesis implies UGC(\cite{raghavendra2010graph}) and it has been used to prove many inapproximability results for problems like balanced separator and minimum linear arrangement (\cite{raghavendra2012reductions}).

\kBP. Given a weighted undirected graph $G$ on $n$ vertices, the goal is to partition the vertices into $k$ equally sized components of size roughly $n/k$ so that the total weight of the edges connecting different components is small.
It is an important generalization of well-known graph partitioning problems, including minimum bisection ($k$=2) and minimum balanced cut and it has applications in VLSI design, data mining (clustering), social network analysis etc. It is an NP-hard problem and the authors of~\cite{krauthgamer2009partitioning} present a bi-criteria (which means that pieces may have size $2n/k$ rather than $n/k$) approximation algorithm achieving an approximation of $O(\sqrt{\log n \log k})$. Their result will be useful in our analysis for our spreading metrics SDP in \hyperref[sec:sdp-approx]{Section \ref{sec:sdp-approx}}. However, for us the dependence on $k$ will be unimportant since in our analysis we only need $k$ to be a small constant (e.g. $k$=4).

\MLA. Given a weighted undirected multigraph $G(V,E,w)$ ($|V|=n$) we want to find a permutation $\pi: V\rightarrow \{1,2,\dots,|V| \}$ that minimizes:
\[
\sum_{(x,y)\in E, x<y}w(x,y) \cdot | \sigma(y)-\sigma(x) |
\]
A factor $O(\sqrt{\log n}\log \log n)$ approximation for MLA was shown in~\cite{charikar20062,feige2007improved}. In addition, some recent hardness results are also known: in \cite{raghavendra2012reductions} it is shown that it is SSE-hard to approximate MLA to within any fixed constant factor and in \cite{ambuhl2011inapproximability} the authors prove that MLA has no polynomial time approximation scheme, unless NP-complete problems can be solved in randomized subexponential time.

\textsc{Gaussian Graphs.}
For a constant $\rho \in (-1, 1)$, let $\mathcal{G}(\rho)$ denote the infinite graph over $\mathbb{R}$ where the weight
of an edge $(x, y)$ is the probability that two standard Gaussian random variables $X, Y$ with correlation $\rho$ equal
$x$ and $y$ respectively. The expansion profile of Gaussian graphs is given by $\Phi_{\mathcal{G}(\rho)}(\mu) = 1 - \Gamma_{\rho}(\mu)/\mu$ where
the quantity $\Gamma_{\rho}(\mu)$ defined as
\[
\Gamma_{\rho}(\mu) \triangleq \mathbb{P}_{(x,y)\sim \mathcal{G}_\rho} (x\ge t, y\ge t),
\]
where $\mathcal{G}_{\rho}$ is the 2-dimensional Gaussian distribution with covariance matrix:
\[
\begin{bmatrix}
1 & \rho \\
\rho & 1
\end{bmatrix}
\]
and $t\ge 0$ is such that $\mathbb{P}_{(x,y)\sim \mathcal{G}_{\rho}} \{x \ge t\} = \mu$.

\section{Spreading Metrics and Bartal's Decomposition}\label{sec:spreading}

Bartal (\cite{bartal2004graph}) presented a graph decomposition lemma and used it as a key ingredient in order to prove an $O(\log n)$ approximation guarantee for the spreading metrics paradigm in undirected graphs, thus improving the results for many problems considered in \cite{even2000divide}. At a high level, the decomposition finds a cluster in the graph that has a low diameter, such that the weight of the cut created is small with respect to the weight of the cluster. The decomposition is essentially based on the decomposition of \cite{garg1993approximate} performed in a careful manner so as to achieve a more refined bound on the ratio between the cut and the cluster's weight.

Let $G=(V,E)$ be an undirected graph with two weight functions $w,l : E\rightarrow \mathbb{R}^+$. We interpret $l(e)$ to be the length of the edge $e$, and the distance $d(u,v)$ between pairs of vertices $u,v$ in the graph, is determined by the length of the shortest path between them. Given a subgraph $H=(V_H,E_H)$ of $G$, let $d_H$ denote the distance in $H$, let $\Delta(H)$ denote the diameter of $H$, and $\Delta=\Delta(G)$. We also define the
volume of $H$, $\phi(H)=\sum_{e \in E_H} w(e)l(e)$.

Given a subset $S\subseteq V$, $G(S)$ denotes the subgraph of $G$ induced by $S$. Given partition $(S,\bar{S})$, let $\Gamma(S)= \{(u,v) \in E ; u \in S, v \in \bar{S} \}$ and $cut(S) = \sum_{e \in \Gamma(S)} w(e)$.
For a vertex $v$ and $r \ge 0 $, the ball at radius $r$ around $v$ is defined as $B(v,r)= \{u \in V | d(u,v) \le r \}$. Let $S=B(v,r)$. Define

\begin{align*}
\bar{\phi}(S)=\bar{\phi}(v,r)=\sum_{e=(u,w): u,w \in S } w(e)l(e) + \sum_{e=(u,w) \in \Gamma(S)} w(e)(r - d(v,u)).
\end{align*}

Given a subgraph $H$, we can similarly define $\bar{\phi}_H$ with respect to the subgraph $H$. Define the spherical-volume of $H$,
\begin{align*}
\phi^*(H)= \max_{v\in H} \bar{\phi}_H(v,\dfrac{\Delta(H)}{4}).
\end{align*}

In the following, we state three basic lemmas, which are based on a standard argument similar to that of \cite{garg1993approximate}, before stating the main result in \ref{thm:spreading}. For the proofs, we refer the reader to \cite{bartal2004graph}. 

\begin{lemma}
Given a graph $G$, there exists a partition $(S,\bar{S})$ of $G$, where $S$ is a ball and 
\begin{align*}
cut(S)\le \dfrac{8\ln(\phi^*(G)/\phi^*(G(S)))}{\Delta(G)} \cdot \bar{\phi(S)}.
\end{align*}
\end{lemma}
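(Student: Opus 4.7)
The plan is to adapt the classical region-growing argument of Leighton-Rao and Garg-Vazirani-Yannakakis to the spherical-volume setting. The driving intuition: if we grow a ball $B(v,r)$ around a well-chosen center and the cut never becomes small compared to the enclosed volume, then the enclosed volume must grow exponentially in $r$, eventually exceeding the global cap $\phi^*(G)$ and yielding a contradiction.

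First I would fix the center $v \in V$ to attain $\phi^*(G) = \bar{\phi}_G(v, \Delta(G)/4)$, so the ball at the outer radius $\Delta/4$ has the maximum possible spherical volume. Then I would study the one-variable function $h(r) = \bar{\phi}_G(v, r)$ on $r \in [0, \Delta/4]$: it is continuous, nondecreasing, and piecewise smooth, with $h'(r) = cut(B(v,r))$ away from the finitely many radii where a new vertex enters the ball (each boundary edge $(x,y)$ contributes $w(e)(r - d(v,x))$ to $\bar{\phi}$, whose $r$-derivative is $w(e)$). The non-smooth radii can be absorbed by one-sided derivatives or an $\epsilon$-smoothing.

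The heart of the argument is an averaging/contradiction step over the radius window $[\Delta/8, \Delta/4]$, of length $\Delta/8$ (chosen precisely so that the factor $8$ in the lemma drops out). Writing
\[
\int_{\Delta/8}^{\Delta/4} \frac{h'(r)}{h(r)}\,dr \;=\; \ln\!\bigl(h(\Delta/4)/h(\Delta/8)\bigr) \;=\; \ln\!\bigl(\phi^*(G)/h(\Delta/8)\bigr),
\]
the minimum of $h'(r)/h(r) = cut(B(v,r))/\bar{\phi}(v,r)$ over this window is at most the above quantity divided by $\Delta/8$. Hence there exists $r^* \in [\Delta/8, \Delta/4]$ with $cut(B(v,r^*)) \le \tfrac{8\ln(\phi^*(G)/h(\Delta/8))}{\Delta}\,\bar{\phi}(v,r^*)$, and we take $S = B(v,r^*)$.

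The main obstacle is promoting $h(\Delta/8)$ in the logarithm to $\phi^*(G(S))$ in order to recover the lemma's exact constant. My plan is to do this by a direct comparison of spherical volumes in nested subgraphs: for any internal center $u \in V(G(S))$ and radius $R \le \Delta(G(S))/4 \le \Delta/8$, each contribution to $\bar{\phi}_{G(S)}(u, R)$ is matched by a contribution to $\bar{\phi}_G(v, \Delta/8)$, using the fact that each boundary term $w(e)(R - d_H(u,x)) \le w(e)\,l(e)$ and that $\bar{\phi}_G(v, \Delta/8)$ already accounts for every internal edge of $G(S)$ with weight $w(e)\,l(e)$ plus nonnegative extra boundary contributions; thus $\phi^*(G(S)) \le \phi(G(S)) \le h(\Delta/8)$. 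Since the lemma asks for existence of some ball, the remaining delicate point is to argue the chain $\ln(\phi^*(G)/\phi^*(G(S))) \ge \ln(\phi^*(G)/h(\Delta/8))$ in the correct direction for the chosen $r^*$; if the naive choice $r^* \in [\Delta/8, \Delta/4]$ does not close this cleanly, I would fall back on the standard ``doubling-window'' variant of region-growing (growing until the spherical volume doubles) which rearranges the logarithm exactly to $\ln(\phi^*(G)/\phi^*(G(S)))$. This is where the technical care in Bartal's decomposition lies, and it will be the main step to nail down.
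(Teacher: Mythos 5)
First, a point of reference: the paper does not prove this lemma at all --- it is quoted from Bartal's paper, and the text explicitly defers the proof there (``For the proofs, we refer the reader to [Bartal 2004]''). So your proposal has to be judged on its own merits rather than against a proof in this paper. Your skeleton is the right one (GVY/Leighton--Rao region growing run on the spherical volume $h(r)=\bar{\phi}_G(v,r)$ over the window $[\Delta/8,\Delta/4]$, with $h'(r)=cut(B(v,r))$ and an averaging step producing the factor $8/\Delta$), and indeed some argument of this shape underlies Bartal's lemma. The problem is exactly at the step you yourself flag as the main obstacle, and your proposed bridge for it is incorrect.

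You claim $\phi^*(G(S)) \le \phi(G(S)) \le h(\Delta/8)$. The first inequality is fine (each boundary term $w(e)(R-d(u,x))$ is at most $w(e)l(e)$), but the second is false in general: $S=B(v,r^*)$ with $r^*$ as large as $\Delta/4$, so $\phi(G(S))$ counts, in full, every edge both of whose endpoints lie in the annulus $\{x:\Delta/8< d(v,x)\le r^*\}$, while $\bar{\phi}_G(v,\Delta/8)$ counts such edges not at all (it only counts edges inside $B(v,\Delta/8)$ plus partial contributions of edges crossing its boundary). Your justification that ``$\bar{\phi}_G(v,\Delta/8)$ already accounts for every internal edge of $G(S)$ with weight $w(e)l(e)$'' is precisely what fails; a graph whose volume is concentrated at distance about $\Delta/4$ from $v$ kills the inequality. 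More generally, the quantity $\phi^*(G(S))$ that must end up in the logarithm is measured from the best center $u^*\in S$ (which may sit near the outer boundary of the ball, far from $v$) at radius $\Delta(G(S))/4$ in the induced metric $d_{G(S)}$, so the edges it counts can live entirely in the outer annulus where $h(\Delta/8)$ sees nothing; maximality of $v$ for $\bar{\phi}_G(\cdot,\Delta/4)$ does not by itself control these terms at radius $\Delta/8$ around a different center. Relating the region-growing denominator to $\phi^*(G(S))$ --- with its own center and its own induced diameter --- is the actual content of Bartal's lemma, and the one-sentence fallback (``grow until the spherical volume doubles'') does not supply it: a doubling rule gives a $\ln 2$-per-step type bound, not the stated bound with $\ln\bigl(\phi^*(G)/\phi^*(G(S))\bigr)$, without an additional argument. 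So as it stands the proposal reproduces the standard region-growing shell but leaves a genuine gap at the step that distinguishes this lemma from the classical one.
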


The decomposition will become useful when it is applied recursively. This is particularly important for our main application which is hierarchical clustering and we can have a recursive approach where we find a good cut, creating two subgraphs and then we continue with the two new components. More generally, for the second lemma of \cite{bartal2004graph}, we will be interested in applications which are associated with a cost function $cost$ over subgraphs $\hat{G}$ of $G$
which is nonnegative, 0 on singletons and obeys the following natural recursion rule:

\begin{align}
cost(\hat{G}) \le cost(\hat{G}(S)) + cost(\hat{G}(\bar{S})) + \Delta(\hat{G}) \cdot cut(S).  \label{eq:bartal-lemma}
\end{align}

Now we state the second basic lemma:

\begin{lemma}
Any cost function defined by (\ref*{eq:bartal-lemma}) obeys $cost(G)\le O(\log (\phi/\phi_0)) \cdot \phi(G)$, where $\phi=\phi(G)$ and $\phi_0$ is the minimum value of $\phi(\hat{G})$ on non-singleton subgraphs $\hat{G}$.
\end{lemma}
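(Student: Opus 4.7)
The plan is to prove the bound by induction on the size (or some appropriate structural parameter) of the subgraph $\hat{G}$, with the first lemma as the workhorse at each recursive step. The base case is trivial since $cost$ vanishes on singleton subgraphs, and $\phi_0$ is defined as a lower bound for non-singleton subgraphs, so the logarithm is well-defined throughout.

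For the inductive step, I would first apply the first lemma to the current subgraph $\hat{G}$ to obtain a ball-partition $(S,\bar{S})$ with
\[
\Delta(\hat{G}) \cdot cut(S) \;\le\; 8 \ln\!\bigl(\phi^*(\hat{G})/\phi^*(\hat{G}(S))\bigr) \cdot \bar{\phi}(S) \;\le\; 8 \ln\!\bigl(\phi^*(\hat{G})/\phi^*(\hat{G}(S))\bigr) \cdot \phi(\hat{G}),
\]
using the fact that $\bar{\phi}(S) \le \phi(\hat{G})$ since the ball's contribution is counted within the volume. Invoking the cost recursion \eqref{eq:bartal-lemma} then yields
\[
cost(\hat{G}) \;\le\; cost(\hat{G}(S)) + cost(\hat{G}(\bar{S})) + 8 \ln\!\bigl(\phi^*(\hat{G})/\phi^*(\hat{G}(S))\bigr) \cdot \phi(\hat{G}).
\]
On the two recursive terms I would apply the inductive hypothesis: both $\phi(\hat{G}(S))$ and $\phi(\hat{G}(\bar{S}))$ are bounded by $\phi(\hat{G})$, and both $\phi^*$ values are no larger than $\phi^*(\hat{G})$, so each subproblem contributes at most $O(\log(\phi(\hat{G})/\phi_0))\cdot \phi$ of the respective child.

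The main obstacle is controlling how the logarithmic factors accumulate across the recursion tree; a naive summation would multiply the logs rather than add them. The right fix is to strengthen the inductive statement so that it bounds $cost(\hat{G})$ by a potential of the form $c \log(\phi^*(\hat{G})/\phi_0) \cdot \phi(\hat{G})$, i.e.\ using the spherical volume $\phi^*$ inside the logarithm. With this choice, the inductive step must absorb the extra term $8 \ln(\phi^*(\hat{G})/\phi^*(\hat{G}(S))) \cdot \phi(\hat{G})$ into the difference between the parent potential $c \log(\phi^*(\hat{G})/\phi_0) \cdot \phi(\hat{G})$ and the sum of the two child potentials; picking $c \ge 8$ and exploiting $\phi(\hat{G}(S))+\phi(\hat{G}(\bar{S})) \le \phi(\hat{G})$ together with the monotonicity $\phi^*(\hat{G}(\bar{S})) \le \phi^*(\hat{G})$ makes this telescoping work out.

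Equivalently, one can view the recursion as iteratively peeling off balls $S_1, S_2, \dots$ from a shrinking residual graph, so that the sum of $\ln(\phi^*_i / \phi^*_{i+1})$ telescopes to $\ln(\phi^*(G)/\phi_0) \le \ln(\phi(G)/\phi_0)$; this yields a clean proof that the total cost is $O(\log(\phi/\phi_0)) \cdot \phi(G)$. With the potential-based induction hypothesis in hand, the inductive step is then essentially a routine algebraic verification, and setting $\hat{G}=G$ at the root of the induction produces exactly the claimed bound.
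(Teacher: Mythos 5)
First, a remark on scope: the paper does not prove this lemma itself; it explicitly refers the reader to Bartal~\cite{bartal2004graph} for the proofs of all three decomposition lemmas, so your attempt has to be judged against Bartal's argument, whose general shape (induction with a potential of the form $c\,\phi(\hat G)\ln(\phi^*(\hat G)/\phi_0)$ so that the logarithms add rather than multiply) you have correctly identified. The genuine gap is at your very first step: you weaken the guarantee of the first lemma by replacing $\bar{\phi}(S)$ with $\phi(\hat G)$, and with that weakening the absorption you describe in the inductive step does not go through. Writing $S'=\hat G(S)$ and $\bar S'=\hat G(\bar S)$, your inductive step (with any constant $c$) needs
\[
c\,\phi(S')\ln\tfrac{\phi^*(S')}{\phi_0}+c\,\phi(\bar S')\ln\tfrac{\phi^*(\bar S')}{\phi_0}+8\,\phi(\hat G)\ln\tfrac{\phi^*(\hat G)}{\phi^*(S')}\;\le\;c\,\phi(\hat G)\ln\tfrac{\phi^*(\hat G)}{\phi_0},
\]
and in the regime where a small ball is peeled off a large graph (so $\phi(\bar S')\approx\phi(\hat G)$, $\phi^*(\bar S')\approx\phi^*(\hat G)$, while $\phi^*(S')\ll\phi^*(\hat G)$) the second term alone nearly exhausts the right-hand side, leaving no room for the third term, which is of order $\phi(\hat G)\ln(\phi^*(\hat G)/\phi^*(S'))$ and bounded away from zero. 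Increasing $c$ does not help, since the offending child term also scales with $c$. So $\phi(S')+\phi(\bar S')\le\phi(\hat G)$ plus monotonicity of $\phi^*$ is not enough once the cut cost is charged against the full volume $\phi(\hat G)$.

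The whole point of the refined bound $\Delta(\hat G)\,cut(S)\le 8\ln\bigl(\phi^*(\hat G)/\phi^*(S')\bigr)\,\bar\phi(S)$ is that the charge is proportional to the volume of the ball being removed, which is disjoint from the volume of the far side: $\phi(S')\le\bar\phi(S)$ and $\bar\phi(S)+\phi(\bar S')\le\phi(\hat G)$. Keeping $\bar\phi(S)$, the induction closes cleanly: the ball side plus the cut cost is at most $8\bar\phi(S)\bigl[\ln(\phi^*(S')/\phi_0)+\ln(\phi^*(\hat G)/\phi^*(S'))\bigr]=8\bar\phi(S)\ln(\phi^*(\hat G)/\phi_0)$, and adding the $\bar S'$ term (bounded via $\phi^*(\bar S')\le\phi^*(\hat G)$) gives at most $8\bigl(\bar\phi(S)+\phi(\bar S')\bigr)\ln(\phi^*(\hat G)/\phi_0)\le 8\phi(\hat G)\ln(\phi^*(\hat G)/\phi_0)$. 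Your alternative "peeling" picture has a related flaw: the logarithm supplied by the first lemma compares the parent with the removed ball $S_i$, not with the residual graph, so the sum $\sum_i\ln(\phi^*_i/\phi^*_{i+1})$ you invoke is not what the lemma gives, and it does not telescope as claimed; the additive behaviour of the logs only materialises through charging against $\bar\phi(S)$ as above. (There are further details --- the exact monotonicity statement for $\phi^*$ under induced subgraphs and the relation between $\phi^*$, $\phi$ and the floor $\phi_0$ --- which are handled in Bartal's paper, to which the present paper defers.)
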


Finally, we have the third lemma which will give us the $O(\log n)$ approximation. We can obtain a bound depending only on $n$, by modifying the process slightly by associating a volume $\phi(G)/n$ with the nodes, like in \cite{garg1993approximate}. This will ensure that $\phi_0 \ge \phi(G)/n$ and by substituting we get what we want:

\begin{lemma}
The function defined by (\ref*{eq:bartal-lemma}) using the modified procedure obeys $cost(G) \le O(\log n) \cdot \phi (G)$.
\end{lemma}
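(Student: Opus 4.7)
The plan is to obtain this lemma as an immediate corollary of the preceding lemma (which gives $cost(G) \le O(\log(\phi/\phi_0)) \cdot \phi(G)$), once we have verified that the modified procedure forces $\phi_0 \ge \phi(G)/n$. In other words, the content of this lemma is entirely in the effect of the modification on the smallest achievable subgraph volume; once that is established, substituting into the previous bound produces the $O(\log n)$ factor.

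First, I would spell out the modification carefully. Following Garg--Vazirani--Yannakakis, we augment the volume function by placing a ``seed'' weight of $\phi(G)/n$ on every vertex, so that for any subgraph $\hat{G}$ on vertex set $V_{\hat{G}}$ we redefine
\[
\phi(\hat{G}) \;=\; \sum_{e \in E_{\hat{G}}} w(e)\,l(e) \;+\; \frac{\phi(G)}{n}\cdot |V_{\hat{G}}|.
\]
The ball volumes $\bar{\phi}(v,r)$ and the spherical volume $\phi^*$ are updated analogously, by adding $\phi(G)/n$ for each vertex fully contained in the ball and a proportional share for the boundary vertices. Next I would check that the decomposition lemma (the first lemma of this appendix) and the recursion inequality~\eqref{eq:bartal-lemma} continue to hold with this enriched $\phi$: for the decomposition, the region-growing argument only uses that $\phi$ is additive over subgraphs and monotone in the ball radius, which is preserved by adding node weights; for the recursion, the cost function structure is unchanged.

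Given that, I would then invoke the previous lemma verbatim on the modified $\phi$, yielding
\[
cost(G) \;\le\; O\!\left(\log\frac{\phi(G)}{\phi_0}\right)\cdot \phi(G),
\]
where $\phi_0$ is the minimum of $\phi(\hat{G})$ over non-singleton subgraphs. Because any non-singleton $\hat{G}$ has at least two vertices, the seed term alone contributes at least $2\phi(G)/n \ge \phi(G)/n$, so $\phi_0 \ge \phi(G)/n$ and therefore $\log(\phi(G)/\phi_0) \le \log n$. Plugging this in gives $cost(G) \le O(\log n)\cdot \phi(G)$, which is the claim (note that the modified $\phi(G)$ is at most twice the original, since seeds contribute exactly $\phi(G)$ in total, so the extra constant is absorbed in the $O(\cdot)$).

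The only real subtlety — and the step I would treat most carefully — is confirming that inflating $\phi$ by node weights does not degrade the constant in the decomposition lemma: specifically that the ratio bound $cut(S) \le \frac{8\ln(\phi^*(G)/\phi^*(G(S)))}{\Delta(G)}\cdot \bar{\phi}(S)$ survives, since the ball-growing integrals now include mass contributions from vertices rather than purely from edges. This is routine (the mass from a vertex can be modeled as a self-loop of length $0$ at that vertex, or equivalently absorbed into $\bar{\phi}$ as a nondecreasing additive term in $r$), but it is where a careless application could lose a factor. Apart from that, the proof is a direct substitution.
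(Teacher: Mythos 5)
Your proposal is correct and follows exactly the route the paper intends: associate a seed volume of $\phi(G)/n$ with each vertex so that $\phi_0 \ge \phi(G)/n$, then substitute into the preceding lemma's bound $cost(G) \le O(\log(\phi/\phi_0))\cdot\phi(G)$ (the paper itself only sketches this and defers the details to Bartal). Your additional checks---that the modified $\phi(G)$ at most doubles and that the region-growing bound survives the node-weight augmentation---are precisely the routine verifications the cited references carry out.
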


Now we turn our attention to the connection with the spreading metrics paradigm. 
Having the definition of a spreading metric in mind (see \hyperref[sec:lp-approx]{Section~\ref{sec:lp-approx}}) and the previous three recursive graph decomposition lemmas we can easily obtain the following theorem, as proved in \cite{bartal2004graph}:

\begin{theorem}\label{thm:spreading}
There exists an $O(\log n)$ approximation for problems in the spreading metrics paradigm.
\end{theorem}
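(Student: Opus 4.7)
The plan is to combine the three recursive decomposition lemmas of Bartal stated just before \ref{thm:spreading} with the structural properties of a spreading-metric LP relaxation. Given an instance of a problem in the spreading-metrics paradigm, I would first solve its LP relaxation to obtain edge lengths $l(e)$ together with the spreading constraints that force balls $B(v,r)$ around every vertex to have total vertex-mass at least $\Omega(r)$ for all sufficiently small $r$. A standard consequence, which I would invoke, is that the LP objective $\sum_e w(e)l(e) = \phi(G)$ is a lower bound on the optimum and that the spreading condition translates into a diameter bound of the form $\Delta(G(S)) \ge \Omega(|S|/n)\cdot\Delta(G)$ for any large enough induced subgraph; this is the place where the paradigm's structure really enters.

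Next I would apply the first lemma recursively: at each subgraph $\hat{G}$, locate a ball $S$ with $cut(S)\le \tfrac{8\ln(\phi^*(\hat{G})/\phi^*(\hat{G}(S)))}{\Delta(\hat{G})}\cdot \bar{\phi}(S)$, then recurse on $\hat{G}(S)$ and $\hat{G}(\bar{S})$. The output is a laminar family/hierarchical decomposition whose associated ``cost'' (the quantity paid by the target combinatorial problem, e.g.\ total edge stretch or cluster-size weighted cut) naturally satisfies the recursive inequality
\begin{equation*}
cost(\hat{G}) \;\le\; cost(\hat{G}(S)) + cost(\hat{G}(\bar{S})) + \Delta(\hat{G})\cdot cut(S),
\end{equation*}
which is exactly the hypothesis of the second lemma. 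I would verify this recursion on a per-problem basis (for hierarchical clustering, for instance, charging the newly separated edges to the diameter of the enclosing cluster), but the point is that every problem fitting the spreading-metrics paradigm of \cite{even2000divide} has a cost function obeying this kind of telescoping bound.

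Then I would invoke the modification from the third lemma, assigning each vertex a fictitious volume of $\phi(G)/n$ so that the minimum non-singleton volume satisfies $\phi_0 \ge \phi(G)/n$; this is the same trick used by \cite{garg1993approximate}. Plugging this into the bound $cost(G)\le O(\log(\phi/\phi_0))\cdot \phi(G)$ collapses the logarithm to $O(\log n)$, yielding $cost(G) \le O(\log n)\cdot \phi(G)$. Since $\phi(G)$ is the LP value and hence a lower bound on the true optimum, this is an $O(\log n)$-approximation.

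The main obstacle is the bookkeeping in the recursive decomposition: one must argue that carrying out the ball-growing procedure from the first lemma in a careful manner (rather than the cruder version of \cite{garg1993approximate}) gives the refined $\ln(\phi^*/\phi^*(S))$ factor instead of a plain $\ln n$ at each level, because otherwise a naive recursion would accumulate a second logarithmic factor. Ensuring that telescoping over all recursive calls still yields only a single $\log n$ — precisely the content of the second lemma — is the delicate step, and the modified vertex-volume trick is what converts the resulting potential-ratio bound into a clean $O(\log n)$ in terms of $n$ alone.
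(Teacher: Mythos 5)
Your proposal is correct and follows essentially the same route as the paper, which obtains the theorem exactly by combining the three Bartal decomposition lemmas (refined ball-growing cut bound, the recursive cost inequality, and the $\phi(G)/n$ vertex-volume trick) with the lower-bound and diameter properties of the spreading metric, deferring the details to \cite{bartal2004graph}. One small imprecision: the diameter property of the paradigm is $\Delta(U)\ge \Omega(scaler(H_U))$ (e.g.\ $\Omega(|U|)$ for hierarchical clustering), not a bound of the form $\Omega(|U|/n)\cdot\Delta(G)$, but this does not affect the structure of your argument.
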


\section{LP Spreading Metrics and $O(\log n)$ approximation}\label{sec:lp-approx}

We prove here that the hierarchical clustering objective function defined above falls
into the \textit{divide and conquer approximation algorithms via spreading metrics} paradigm of \cite{even2000divide}. 

The spreading metric paradigm applies to minimization problems on undirected graphs $G=(V,E)$ with edge weights $w(e)\ge 1$. We also have an auxiliary graph $H$ and a scaler function on subgraphs of $H$ (e.g. size of the components of $H$). A decomposition tree $T$ is a tree with nodes corresponding to non-overlapping subsets of $V$, forming a recursive partition of the nodes $V$. For a node $t$ of $T$, we denote by $V_t$ the subset at $t$. Associated are the subgraphs $G_t, H_t$ induced by $V_t$. Let $F_t$ be the set of edges that connect vertices that belong to different children of $t$, and $w(F_t)= \sum_{e\in F_t}w(e)$. The cost of $T$ is $cost(T)= \sum_{t\in T} scaler(H_t)\cdot w(F_t)$.

\begin{definition}
A spreading metric is a function on the edges of the graph $l : E\rightarrow \mathbb{R}^+$ satisfying the following two properties:
\begin{enumerate}
\itemsep=0pt
\item Lower bound property: 
The volume of the graph $\sum_{e\in E}w(e)l(e)$ is a lower bound on the optimal cost.
\item Diameter property: 
For any $U \subseteq V$ and $H_U$ the subgraph of $H$ induced by $U$, has diameter $\Delta(U)\ge scaler(H_U)$.
\end{enumerate}
\end{definition}

We closely follow their formulation for the Linear Arrangement problem, which also falls into the spreading metrics paradigm, but we make the necessary semantic changes.
We need to show the divide and conquer applicability and the spreading metrics applicability of their result for our problem.

Firstly, to establish the divide and conquer
applicability we consider any binary decomposition tree $T$ that fully
decomposes the problem.(we normalize the edge weights by dividing with the minimum edge weight). Note that there is a $1-1$ correspondence between the
leaves of $T$ and the vertices of $G$. The solution to the hierarchical clustering problem that is represented by $T$ is easily given by the cuts, in $G$, induced by the internal nodes of $T$. The cost of the tree $T$ is:
\begin{align}
cost_G(T)=\sum_{t \in T}|V_t|w(F_t).
\end{align}
where $V_t$ and $F_t$ are the set of vertices and cut corresponding
to the tree node $t$ and $w(F_t)$ is the total weight of the edges cut at this internal node $t$.
We need to show that this cost bounds the cost of
solutions built up from $T$. For this we prove that for every tree node $t$ the cost of
the subtree rooted at $t$, denoted $T_t$, bounds the cost of solutions built up from $T_t$
to the hierarchical clustering problem for the subgraph of G induced by the set of
vertices $V_t$. We prove the claim by induction on the level of the tree nodes. The 
claim clearly holds for all leaves of $T$. Consider an internal tree node $t \in T$ and
denote its two children by $t_L$ and $t_R$. By induction the claim holds for both $t_L$ and
$t_R$. The solution represented by $T_t$ is given by concatenating the solutions
represented by $T_{t_L}$ and $T_{t_R}$. Note that the additional cost is at most $|V_t|$ times the capacity of the cut $F_t$ that separates $V_{t_L}$ from $V_{t_R}$. We get
\begin{align}
cost_G(T_t) \leq cost_G(T_{t_L}) + cost_G(T_{t_R})+ |V_t|w(F_t).
\end{align}
The inductive claim follows.

We now show how to compute the spreading metric that assigns length $l(e)$ to an edge $e \in E$ of the graph. Consider the following
linear program (LP1):

\begin{align}
\min \sum_{e \in E} w(e)\cdot l(e)\\
s.t.\ \ \ \ \  \forall U \subseteq V, \forall v \in V: \sum_{u \in U} dist_l(u,v) \geq \dfrac{1}{4} (|U|^2-1) \label{ineq:spreading}\\
\forall e \in E: l(e) \geq 0
\end{align}
In the linear program, we follow the notation that regards $l(e)$ as edge
lengths, and $dist_l(u, v)$ is the length of the shortest path from $u$ to $v$. We will refer to constraint (6) as the spreading constraint.
The linear program can be solved in polynomial time since we can construct a separation oracle.
In order to verify that the spreading constraint (\ref*{ineq:spreading}) is satisfied, for each vertex $v$, we sort the vertices in $V$ in increasing order of distance $dist_l(u,v)$ and verify the spreading constraint for all prefixes $U$ of this sorted order.

\begin{lemma*}
Let $l(e)$ denote a feasible solution of the linear program. For every $U \subseteq V$ with $(|U|>1)$, and for every vertex $v \in U$ there is a vertex $u \in U$ for which $dist_l(u,v)\geq \dfrac{1}{10}|U|$.
\end{lemma*}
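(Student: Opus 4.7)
The plan is a one-line averaging argument dressed up as a proof by contradiction, using exclusively the spreading constraint of (LP1). Fix any $U \subseteq V$ with $|U| > 1$ and any $v \in U$. I would suppose, for contradiction, that every $u \in U$ satisfies $dist_l(u,v) < |U|/10$. Summing this strict bound over all $u \in U$ gives
\[
\sum_{u \in U} dist_l(u,v) \;<\; |U| \cdot \frac{|U|}{10} \;=\; \frac{|U|^2}{10}.
\]

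On the other hand, since $l(\cdot)$ is feasible for (LP1), the spreading constraint applied to the pair $(U,v)$ guarantees
\[
\sum_{u \in U} dist_l(u,v) \;\geq\; \frac{1}{4}\bigl(|U|^2 - 1\bigr).
\]
Combining these two bounds would force $\tfrac{1}{4}(|U|^2 - 1) < \tfrac{1}{10}|U|^2$, which rearranges to $6|U|^2 < 10$, i.e.\ $|U|^2 < 5/3$. This is impossible for $|U| \geq 2$, delivering the desired contradiction and producing a $u \in U$ with $dist_l(u,v) \geq |U|/10$.

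There is no real obstacle: the constants $1/4$ (in the spreading constraint) and $1/10$ (in the lemma's conclusion) leave genuine slack, so the only check is the elementary inequality $(|U|^2-1)/4 \geq |U|^2/10$ for $|U|\geq 2$. The term $dist_l(v,v)=0$ appears harmlessly in the sum and does not affect the counting. Note also that the argument only uses the spreading constraint for the specific set $U$ and vertex $v$ under consideration, which is exactly what (LP1) supplies.
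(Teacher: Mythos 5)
Your proof is correct and is essentially the paper's argument: both use only the spreading constraint for the pair $(U,v)$ together with an averaging step, with your contradiction framing being just the contrapositive of the paper's "some vertex is at least the average distance" phrasing.
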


\begin{proof}
The average distance of a node $u\in U-\{v\}$ from $v$ is greater than $\dfrac{1}{4} (|U|-1)$, 
because of the constraint corresponding to $U$ and $v$. Therefore,
there exists a vertex $u \in U$ whose distance from $v$ is at least the average distance
from $v$, and the lemma follows, since $dist_l(u,v)\geq \dfrac{1}{4} (|U|-1)\geq \dfrac{1}{10}|U|$. $(|U|>1)$
\end{proof}

Note that the previous lemma comes short of the diameter guarantee by a
factor of 10: while the diameter guarantee requires that the diameter of a subset
$U$ be greater than $|U|$, the proven bound is only $|U|/10$. However, it is known that this only affects the constant in the approximation
factor.

In the next lemma, we prove that the volume of an optimal solution of the
linear program satisfies the lower bound property.

\begin{lemma*}
The cost of an optimal solution of the linear program is a lower
bound on the cost of an optimal hierarchical clustering of $G$.
\end{lemma*}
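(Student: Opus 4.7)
The plan is to exhibit, for any hierarchical clustering tree $T$ of $G$, a feasible LP solution whose objective value equals $cost_G(T)$; applied to the optimal tree $T^*$, this gives $\mathrm{LP}^* \leq cost_G(T^*)$, which is the desired lower bound. The natural choice is to set $l(e) = |\mathrm{leaves}(T[u \lor v])|$ for each edge $e = (u,v) \in E$, so that the LP objective evaluates to $\sum_{(u,v) \in E} w_{uv} \, |\mathrm{leaves}(T[u \lor v])| = cost_G(T)$ by definition. Nonnegativity of $l$ is automatic, so the only nontrivial task will be to verify the spreading constraint on the induced shortest-path distances $dist_l$.

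To control $dist_l$, I will first show that the function $d(u,v) := |\mathrm{leaves}(T[u \lor v])|$, extended by $d(u,u) = 0$, is a metric on $V$. The key observation is that for any three leaves $u, v, w$ of a rooted tree, two of the three pairwise lowest common ancestors coincide (both equal to the LCA of all three leaves) and the third is a descendant of them; consequently $d(u,w) \leq \max\{d(u,v), d(v,w)\}$, which already implies the triangle inequality. Since $l$ agrees with $d$ on edges of $E$ and $d$ satisfies the triangle inequality on all of $V$, every $u$-$v$ path $P$ in $G$ satisfies $\sum_{e \in P} l(e) \geq d(u,v)$, and hence $dist_l(u,v) \geq d(u,v)$. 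This reduction to the tree metric is the main conceptual step: checking the spreading inequality directly on shortest-path distances would be awkward because shortest paths in $G$ need not respect the tree $T$ at all, whereas $d$ is built from $T$.

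It then suffices to show $\sum_{u \in U} d(u,v) \geq \tfrac{1}{4}(|U|^2 - 1)$ for every $U \subseteq V$ and every $v \in V$. Let $\{v\} = C_0 \subsetneq C_1 \subsetneq \cdots \subsetneq C_h = V$ be the chain of clusters of $T$ containing $v$, and for each $u \in U \setminus \{v\}$ let $i(u)$ be the smallest index with $u \in C_{i(u)}$, so that $d(u,v) = |C_{i(u)}|$. Writing $n_i = |U \cap (C_i \setminus C_{i-1})|$ for $i \geq 1$ and $K = \sum_i n_i$, the containment $U \cap C_i \subseteq C_i$ gives $|C_i| \geq n_1 + \cdots + n_i$, whence
\[
\sum_{u \in U} d(u,v) \;=\; \sum_{i \geq 1} n_i |C_i| \;\geq\; \sum_{i \geq 1} n_i (n_1 + \cdots + n_i) \;=\; \tfrac{1}{2}\Bigl(K^2 + \sum_i n_i^2\Bigr) \;\geq\; \tfrac{K(K+1)}{2},
\]
using $\sum_i n_i^2 \geq K$ since the $n_i$ are nonnegative integers summing to $K$. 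A direct check then confirms $K(K+1)/2 \geq (|U|^2-1)/4$ in both cases $v \in U$ (where $K = |U|-1$) and $v \notin U$ (where $K = |U|$), completing the verification and hence the lemma.
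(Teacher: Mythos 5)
Your proposal is correct and takes essentially the same route as the paper: set $l(e)=|\mathbf{leaves}(T[u\lor v])|$ for the tree $T$, note that the LP objective then equals $cost_G(T)$, and verify the spreading constraint using the tree (ultra)metric, which lower-bounds the shortest-path distances $dist_l$. Your chain-of-clusters counting is a rigorous version of the paper's informal ``packed around $v$'' argument, and it also handles the case $v\notin U$, which the paper's proof leaves implicit.
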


\begin{proof}
Consider any binary hierarchical clustering given by the sequence of cuts in the decomposition tree $T$ and define $l(e)= |\bold{leaves}(T[i\lor j])|$ for edge $e=(i,j) \in E$. It is easy to see that this is indeed a metric and it is actually an 
ultrametric. We show that $l(e)$ is a feasible solution for the linear program. The cost $\sum_{e\in E} w(e)\cdot l(e)$ equals the cost of the hierarchical clustering induced by the tree $T$. The feasibility of $l(e)$ is proved as follows: Consider a subset $U \subseteq V$ and a vertex $v \in U$. We observe that the average distance from $v$ of the vertices in $U$ will be minimized when $U$ is ``packed around" $v$, meaning that with each cut we peel off only one vertex at a time. We have that:
\begin{align*}
\sum_{u \in U}dist_l(u,v)=2+3+...+|U| \geq \dfrac{1}{2}(|U|^2-1) \geq \dfrac{1}{4}(|U|^2-1)
\end{align*}

Hence, $l(\cdot)$ is a feasible solution and the lemma follows.
\end{proof}

With the above two lemmas we have proved that our hierarchical clustering objective function falls into the spreading metrics paradigm, because it satisfies the lower bound property and the diameter property. 
Using Bartal's decomposition and specifically \hyperref[thm:spreading]{Theorem \ref{thm:spreading}} from \hyperref[sec:spreading]{Section \ref{sec:spreading}} we get an approximation guarantee of $O(\log n)$:

\begin{theorem} 
There exists an $O(\log n)$ approximation for the hierarchical clustering objective function defined by (\ref*{obj:clustering}).
\end{theorem}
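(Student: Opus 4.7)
The plan is to combine the three ingredients that have already been established and invoke Theorem~\ref{thm:spreading} as a black box, with essentially no new work required. First, the divide-and-conquer recursion derived earlier, namely $cost_G(T_t) \le cost_G(T_{t_L}) + cost_G(T_{t_R}) + |V_t|\cdot w(F_t)$, matches the cost recursion rule (\ref{eq:bartal-lemma}) required by Bartal's framework once we identify the scaler $scaler(H_{V_t})$ with $|V_t|$ and observe that this quantity is (up to constants) the diameter $\Delta(\hat{G})$ of the induced subgraph in the LP metric.

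Second, I would point out that the two lemmas just proved supply exactly the two properties demanded by the spreading metrics paradigm applied to our setting: the volume lemma gives the lower-bound property ($\sum_{e\in E} w(e)\,l(e)$ lower bounds the cost of any hierarchical clustering), and the diameter lemma gives the diameter property (every induced subset $U$ has diameter at least $|U|/10$ in the LP shortest-path metric, so in particular diameter $\ge scaler(H_U)/10$). The constant $1/10$ loss is harmless since it is absorbed into the $O(\log n)$.

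Third, the algorithm producing the approximation is the natural one: solve the LP to obtain edge lengths $l(e)$; apply Bartal's decomposition lemma recursively, each time finding a ball-shaped cut in the current cluster whose cut weight is bounded relative to its spherical volume; and output the resulting binary decomposition tree as the hierarchical clustering. Summing the cost contributions $|V_t|\cdot w(F_t)$ across all internal nodes, the recursion (\ref{eq:bartal-lemma}) together with Bartal's three lemmas telescopes the total into $O(\log n)\cdot \phi(G)$, where $\phi(G) = \sum_{e\in E} w(e)\,l(e)$ is the LP value, and the volume lemma lower-bounds this by $OPT$.

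There is no real obstacle here: all the content is packaged in Theorem~\ref{thm:spreading}, and the only subtle point is confirming the alignment between the HC cost scaler $|V_t|$ and the diameter notion used in Bartal's recursion, which is immediate from the diameter lemma up to the absolute constant.
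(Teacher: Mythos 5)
Your proposal is correct and follows essentially the same route as the paper: it verifies the divide-and-conquer recursion, uses the two LP lemmas to establish the lower-bound and (up to the factor $10$) diameter properties, and then invokes Bartal's Theorem~\ref{thm:spreading} as a black box. The only nitpick is the phrasing ``the volume lemma lower-bounds this by OPT''---the correct direction is that the LP value $\phi(G)$ lower-bounds OPT, so the cost is at most $O(\log n)\cdot\phi(G)\le O(\log n)\cdot OPT$, which is clearly what you intended.
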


\section{Omitted Proofs}\label{sec:omitted proofs}

\begin{theorem}
Given an unweighted graph G, the Recursive Sparsest Cut algorithm achieves an $O(\alpha_n)$ approximation for the hierarchical clustering problem.
\end{theorem}
\begin{proof}
By combining (\ref*{eq:one}), (\ref*{eq:two}), (\ref*{eq:three}), (\ref*{eq:four}) and summing over all clusters $A$ created by RSC, we get the following result for the overall performance guarantee:

\[ cost_{RSC} = \sum_A r \cdot |E(B_1,B_2)| \le \sum_A 4\alpha_n s |E_{OPT}(\lfloor r/2 \rfloor) \cap A| \le \] \\
\[ \le 4\alpha_n  \sum_A \ \sum_{t=r-s+1}^r |E_{OPT}(\lfloor t/2 \rfloor) \cap A| \le 4\alpha_n \sum_{t=1}^n |E_{OPT}(\lfloor t/2 \rfloor)| \le 8\alpha_n \cdot OPT\]

\end{proof}

\begin{theorem}
RSC achieves a $O(c_f \cdot \alpha_n)$ approximation of the generalized objective function for Hierarchical Clustering.
\end{theorem}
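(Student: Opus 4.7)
The plan is simply to chain together the inequalities already accumulated in Section~\ref*{sec:better-SpCut} and apply the two key claims. The bound~(\ref*{eq:costRSC}) already gives
\[
cost_{RSC} \le 4\alpha_n \cdot c_f \sum_A \dfrac{s(A)}{|A|}\sum_{t=|A|/4}^{|A|/2-1}  w(E_{OPT}(t)\cap A) \cdot g(t),
\]
so the theorem will follow once I replace the inner double sum by a quantity proportional to $OPT$. The two ingredients that do this are already in hand: Claim~\ref*{claim:better-SpCut} bounds the double sum by $2\sum_{t=0}^{n-1} w(E_{OPT}(t)) g(t)$, and Claim~\ref*{claim:OPT} identifies this single sum with $OPT$ exactly.

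First I would invoke Claim~\ref*{claim:better-SpCut}, which handles the nontrivial content: rearranging the double sum edge-by-edge so that each edge $e \in E_{OPT}(t)$ is charged through the chain of clusters $A_1 \supset A_2 \supset \dots \supset A_{k-1}$ that contain it, and using the telescoping estimate~(\ref*{eq:gen-six}) $\sum_i s(A_i)/|A_i| \le |A_1|/(2t) \le 2$ (which relies on the window constraint $2t < |A_i| \le 4t$ forced by $|A|/4 \le t < |A|/2$) to conclude that each edge's total contribution to the LHS is at most $2 w_e g(t)$. Next I would apply Claim~\ref*{claim:OPT}, whose proof accounts, for each edge $e=(u,v)$ with minimal optimal-enclosing cluster of size $r_e$, a contribution of $w_e f(r_e)$ on the RHS and $w_e \sum_{t=0}^{r_e-1} g(t) = w_e f(r_e)$ on the LHS.

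Combining these with~(\ref*{eq:costRSC}) yields
\[
cost_{RSC} \le 4\alpha_n c_f \cdot 2 \sum_{t=0}^{n-1} w(E_{OPT}(t)) g(t) = 8\alpha_n c_f \cdot OPT,
\]
which is the claimed $O(c_f \cdot \alpha_n)$ approximation.

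There is no genuine obstacle at this point; the proof in the appendix is a bookkeeping assembly of pieces proven in the main body. If anything subtle could bite, it would be in the correspondence between the charging window $[|A|/4, |A|/2 - 1]$ used on the algorithmic side (arising from the $f(r)/(f(r/2)-f(r/4))$ ratio in~(\ref*{eq:gen-four})) and the two-sided telescoping argument in Claim~\ref*{claim:better-SpCut}; I would double-check that the integer rounding conventions (writing $r/2, r/4$ for $\lfloor r/2\rfloor, \lfloor r/4 \rfloor$) are consistent across (\ref*{eq:gen-three}) and~(\ref*{eq:gen-six}) so that the constant $c_f \triangleq \max_{n'} f(n')/(f(\lfloor n'/2\rfloor) - f(\lfloor n'/4 \rfloor))$ controls the loss at every level uniformly. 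Once this is verified, writing out the full proof is a one-line substitution.
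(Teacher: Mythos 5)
Your proposal is correct and follows essentially the same route as the paper: it chains the bound~(\ref{eq:costRSC}) with Claim~\ref{claim:better-SpCut} and Claim~\ref{claim:OPT} to obtain $cost_{RSC} \le 8\,c_f\,\alpha_n \cdot OPT$, which is exactly the assembly given in the paper's appendix proof.
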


\begin{proof}
From (\ref*{eq:costRSC}),
\[ 
cost_{RSC} \leq 4\alpha_n \cdot c_f \sum_A \dfrac{s(A)}{|A|}\sum_{t=|A|/4}^ {|A|/2-1}  w(E_{OPT}(t)\cap A) \cdot g(t)
\] 
Combining the above with \hyperref[claim:OPT]{Claim \ref*{claim:OPT}}, (\ref*{eq:gen-two}), (\ref*{eq:gen-three}), (\ref*{eq:gen-four}), (\ref*{eq:gen-five}), (\ref*{eq:gen-six}), we get that the total cost of the RSC is at most $cost_{RSC} \le (8 c_f \alpha_n) \cdot OPT =  O(c_f \cdot \alpha_n)$.
\end{proof}

\begin{claim}
Let $A$ be a cluster of size $r$. SDP-HC solution restricted to set $A$, at level $t=r/4$ is a valid solution for $k$-balanced partitioning based on the \ref*{SDP-$k$-BP} relaxation, where $k=4$.
\end{claim}

\begin{proof}
To see this we need to compare the set of constraints imposed by SDP-HC and \ref*{SDP-$k$-BP}.
In SDP-HC, we have some additional constraints: $x_{ij}^t\le 1$ and $v_i^t \le 1$, but that is fine since imposing extra constraints just makes a stricter relaxation.
Now let's look at the spreading constraints: In SDP-HC we have $\sum_{j}x_{ij}^t \ge n-t \implies \sum_{j \in S}x_{ij}^t \ge |S|-t$ which is basically the \ref*{SDP-$k$-BP} spreading constraints. Thus, by looking at the SDP-HC solution restricted to set $A$ ($|A|=r$), at level $t=r/4$, we can get a valid 4-balanced partitioning solution  of $A$.
\end{proof}

\begin{claim}
$\sum_{A} \sum_{t=|A|/8+1}^{|A|/4} SDP_A(t) \le O$(SDP-HC). \label{claim:sdp-claim}
\end{claim}

\begin{proof}
The flavor of this analysis is similar to our RSC result from \hyperref[sec:better-SpCut]{Section \ref*{sec:better-SpCut}}. Let's look at an edge $e=(u,v)$ at a fixed level $t$. For which sets $A$ do we get the term SDP$_A(t)$ where both endpoints $u,v \in A$? In order for $u,v$ to belong to $A$: $t\in \big (|A|/8, |A|/4 \big ] \implies 4t\le |A| < 8t$. There can be at most one such $|A|$, so LHS is charged only once. To see why $A$ is unique, suppose we had two such clusters $|A_1|, |A_2|$ that both contained $u,v$ with their sizes $|A_1|, |A_2| \in [4t,8t)$. Since we have a hierarchical decomposition, one of $A_1,A_2$ is ancestor of the other. Let's say, wlog, $A_1$ is ancestor of $A_2$. But then, all of its descendants are of size below the range $[4t,8t)$ due to the 4-partition, which is a contradiction.
\end{proof}

\begin{remark}
In the above analysis, whenever we write $|A|/4$ we mean $\lfloor |A|/4 \rfloor$. However this will not affect the result. Plus, we used $O$(SDP-HC), because some additional constants might be introduced whenever the set $A$ is small ($|A|<8$). 
\end{remark}

\begin{theorem}
The cost of the solution produced by the SDP-HC-gen rounding algorithm is within a factor of $O(\sqrt{\log n}\cdot c_f)$ from the 
SDP value where $c_f \triangleq \max_{r\in \{1,...,n\}}\tiny \dfrac{f(r)}{f(r/4)-f(r/8)}$.
\end{theorem}
\begin{proof}

Let A be a cluster of size $|A|=r$ and let $g(t)=f(t+1)-f(t)$. We want to compare the cost of OPT for splitting $A$ with our solution SDP$_A(t)$ for levels $t = r/8 +1,...,r/4$. Using (\ref*{eq:KNS}):
\[
cost_{OPT}(A) = f(r)\cdot w(E_A)\le O(\sqrt{\log n}) f(r)\cdot SDP_A(r/4)\le \]
\[ \le O(\sqrt{\log n}) \dfrac{f(r)}{f(r/4)-f(r/8)} \sum_{t=r/8+1}^{r/4}SDP_A(t)\cdot g(t)\le \]
\[ \le O(\sqrt{\log n})\cdot c_f \sum_{t=r/8+1}^{r/4}SDP_A(t)\cdot g(t).\]

Using now \hyperref[claim:sdp-claim]{Claim \ref*{claim:sdp-claim}} and summing over all clusters $A$ in the hierarchical clustering we get:

\[ 
OPT\le O(\sqrt{\log n})\cdot c_f \cdot \text{SDP-HC}.
\]

where $\sum_A \sum_{t=r/8+1}^{r/4} \text{SDP}_A(t)\cdot g(t) \le O(\text{SDP-HC})$ holds from \hyperref[claim:better-SpCut]{Claim \ref{claim:better-SpCut}} (slightly modified).

\end{proof}

\begin{remark}
As in \hyperref[remark:cost-function]{Remark \ref*{remark:cost-function}}, here $f$ should be polynomially growing.
\end{remark}


\end{document}